\def\th@plain{\slshape}\makeatother
\makeatletter\patchcmd{\th@remark}{\itshape}{\slshape}{}{}\makeatother
\newcounter{bidon}
\newcommand{\rdb}{\refstepcounter{bidon}}
\renewcommand\paragraph[1]{
\rdb
\addcontentsline{toc}{paragraph}{#1} 
\medskip \noindent $\bullet$ \textbf{#1}}
\newcommand {\junk}[1]{}
\newenvironment{algor}[1][]
{\par\smallskip\begin{agc} \vskip 1mm
\begin{algorithm}{\bfseries#1}
\upshape\sffamily
}
{\end{algorithm}
\end{agc}
}
\newenvironment{algorH}[1][]%
{\par\smallskip\begin{agC}\vskip 1mm
\begin{algorithm}{\bfseries#1}
\upshape\sffamily
}
{\end{algorithm}
\end{agC}
}
\newenvironment{falgor}[1][]
{\par\smallskip\begin{agc} \vskip 1mm
\begin{falgorithm}{\bfseries#1}
\upshape\sffamily
}
{\end{falgorithm}
\end{agc}
}
\newenvironment{falgorH}[1][]%
{\par\smallskip\begin{agC}\vskip 1mm
\begin{falgorithm}{\bfseries#1}
\upshape\sffamily
}
{\end{falgorithm}
\end{agC}
}
\newcommand\hsu{\\ \hspace*{4mm}}
\newcommand\hsd{\\ \hspace*{8mm}}
\newcommand\hst{\\ \hspace*{1,2cm}}
\newcommand \noi {\noindent}
\newcommand \ms {\medskip}
\newcommand \mni {\ms\noi}
\newcommand \bs {\bigskip}
\newcommand \bni {\bs\noi}
\newcommand\gen[1]{\left\langle{#1}\right\rangle}
\newcommand \s[1] {\underline#1 }
\newcommand\adots{\mathinner{\mkern0mu\raise 1pt\hbox{\string.}\mkern
3mu\raise 4pt\hbox{\string.}\mkern 3mu\raise 7pt\hbox{\string.}}}
\newcommand \K{\mathbb{K}}
\newcommand \N{\mathbb{N}}
\newcommand \gk{{\bf k}}
\newcommand \gA{\mathbf{A}}
\newcommand \gB{\mathbf{B}}
\newcommand \gC{\mathbf{C}}
\newcommand \gK{\mathbf{K}}
\newcommand \gL{\mathbf{L}}
\newcommand \rP {{\rm P}}
\newcommand \rH {{\rm H}}
\newcommand \Ker {{\rm Ker}}
\renewcommand \mod {{\rm \;mod\;}}
\newcommand \cd {{\rm lc}}
\begin{document}

\thispagestyle{empty}
~ 
\vspace{3cm}

\noindent In this file you find the English version starting on the page  numbered \pageref{beginenglish}.

\medskip \noindent  {\Large \bf The Berlekamp-Massey Algorithm revisited}

\bigskip \noindent  
Then the French version begins on the page numbered \pageref{beginfrench}.

\medskip\noindent   {\Large \bf L'algorithme de Berlekamp-Massey revisité} 

\bigskip  The English version of this paper appeared in AAECC {\bf 17} 1 (2006), 75--82.
Received: 14 September 2005 / Published online: 9 March 2006.

\medskip \noindent {\bf Note.} The paper was submitted to \textsl{Mathematics of Computations} in 2004. The referee wrote that our algorithm was well known but did not give any reference in the literature.

\newpage
\thispagestyle{empty}

~

\pagestyle{headings}
\patchcmd{\sectionmark}{\MakeUppercase}{}{}{}
\setcounter{page}{0}\renewcommand\thepage{E\arabic{page}}

\pagestyle{headings}
\patchcmd{\sectionmark}{\MakeUppercase}{}{}{}

\def\proofname{\textsl{Proof}}

\begingroup

\title{ The Berlekamp-Massey Algorithm revisited}

\author{
Nadia Ben Atti (\thanks {~ Equipe de Math\'ematiques, CNRS UMR
6623, UFR des Sciences et Techniques, Universit\'e de
Franche-Comt\'e, 25 030 Besan\c{c}on cedex, France. {\tt
nadia.benatti@ensi.rnu.tn}}~), Gema M. Diaz--Toca (\thanks {~
Dpto. de Matematicas Aplicada. Universidad de Murcia, Spain. {\tt
gemadiaz@um.es}, partially supported by the Galois Theory and
Explicit Methods in Arithmetic Project HPRN-CT-2000-00114}~) Henri
Lombardi (\thanks{~ Equipe de Math\'ematiques, CNRS UMR 6623, UFR
des Sciences et Techniques, Universit\'e de Franche-Comt\'e, 25
030 Besan\c{c}on cedex, France, {\tt
henri.lombardi@univ-fcomte.fr}, partially supported by the European
Union funded project RAAG CT-2001-00271}~) }

\date{2005}
\maketitle

\startcontents[english]
\setcounter{tocdepth}{4}

\newcommand\sialors [1]{\textbf{si } $#1$ \textbf{ then }}

\newcommand\pour[3]{\textbf{for } $#1$ \textbf{ from } $#2$
      \textbf{ to} $#3$ \textbf{do  }}
\newcommand\tantque[1]{\textbf{while} $#1$ \textbf{ do }}
\newcommand\finpour{\textbf{end for}}
\newcommand\fintantque{\textbf{end while }}
\newcommand\Debut{\\[1mm] \textbf{Start }}
\newcommand\fin{\textbf{\\ End.}}
\newcommand\Entree{\\[1mm] \textbf{Input: }}
\newcommand\Sortie{\\ \textbf{Output : }}
\newcommand\Varloc{\\ \textbf{Local variables : }}

\newcommand \srl{ linearly recurrent sequence\xspace}

\newcommand\comm{\rdb
\noi{\it Comment. }}

\newcommand\COM[1]{\rdb
\noi{\it Comment #1. }}

\newcommand\comms{\rdb
\noi{\it Comments. }}

\newcommand\Pb{\rdb
\noi{\bf Problem. }}

\newcommand \rem{\rdb
\noi{\sl Remark. }}

\newcommand \REM[1]{\rdb
\noi{\sl Remark#1. }}

\newcommand \rems{\rdb
\noi{\sl Remarks. }}

\newcommand \exl{\rdb
\noi{\bf Example. }}

\newcommand \EXL[1]{\rdb
\noi{\bf Example: #1. }}

\newcommand \exls{\rdb
\noi{\bf Examples. }}

\newcommand\gui[1]{``{#1}''}

\newcommand \thref[1] {Theorem~\ref{#1}}
\newcommand \paref[1] {page~\pageref{#1}}
\newcommand \pstfref[1] {Positivstellensatz formel~\ref{#1}}
\newcommand \pstref[1] {Positivstellensatz~\ref{#1}}

\newcommand\subsubsec[1] {\subsubsection*{#1}}

\newcommand \hdr {induction hypothesis\xspace}
\newcommand \ssi {if and only if\xspace}
\newcommand \cnes {necessary and sufficient condition\xspace}
\newcommand \spdg {without loss of generality\xspace}
\newcommand \Propeq {T.F.A.E.\xspace}
\newcommand \propeq {t.f.a.e.\xspace}
\newcommand \disept {17$^{th}$ Hilbert's problem\xspace}

\def \cad {\textit{i.e.}\ }


\newcommand \Amo {$\gA$-module\xspace}
\newcommand \Amos {$\gA$-modules\xspace}

\newcommand \Bmo {$\gB$-module\xspace}
\newcommand \Bmos {$\gB$-modules\xspace}

\newcommand \Zmo {$\gZ$-module\xspace}
\newcommand \Zmos {$\gZ$-modules\xspace}

\newcommand \ZZmo {$\ZZ$-module\xspace}
\newcommand \ZZmos {$\ZZ$-modules\xspace}

\newcommand \Ali {$\gA$-\ali}
\newcommand \Alis {$\gA$-\alis}

\newcommand \Alg {$\gA$-\alg}
\newcommand \Algs {$\gA$-\algs}

\newcommand \kev {$\gk$-vector space\xspace}
\newcommand \kevs {$\gk$-vector spaces\xspace}

\newcommand \Kev {$\gK$-vector space\xspace}
\newcommand \Kevs {$\gK$-vector spaces\xspace}

\newcommand \klg {$\gk$-\alg}
\newcommand \klgs {$\gk$-\algs}

\newcommand \Klg {$\gK$-\alg}
\newcommand \Klgs {$\gK$-\algs}


\newcommand \agq {algebraic\xspace}

\newcommand \alg {algebra\xspace}
\newcommand \algs {algebras\xspace}

\newcommand \agB {Boolean \alg}

\newcommand \algo{algorithm\xspace}
\newcommand \algos{algorithms\xspace}

\newcommand \algq{algorithmic\xspace}

\newcommand \ali {\lin map\xspace}
\newcommand \alis {\lin maps\xspace}

\newcommand \anar {\ari \ri}
\newcommand \anars {\ari \ris}
\newcommand \Anars {\Ari \ris}

\newcommand \ari{arith\-metic\xspace}

\newcommand \auto {automorphism\xspace}
\newcommand \autos {automorphisms\xspace}


\newcommand \cac {algebraically closed field\xspace}
\newcommand \cacs {algebraically closed fields\xspace}

\newcommand \carn{characterization\xspace}  
\newcommand \carns{characterizations\xspace}  

\newcommand \coe {coefficient\xspace}
\newcommand \coes {coefficients\xspace}

\newcommand \coh {coherent\xspace}
\newcommand \cohz {coherent\xspace}

\newcommand \coli {linear combination\xspace}
\newcommand \colis {linear combinations\xspace}

\newcommand \com {comaximal\xspace}

\newcommand \coo {coordinate\xspace}
\newcommand \coos {coordinates\xspace}


\newcommand \ddp {Pr\"ufer domain\xspace}
\newcommand \ddps {Pr\"ufer domains\xspace}

\newcommand \ddk {Krull dimension\xspace}

\newcommand \dfn{definition\xspace}  
\newcommand \dfns{definitions\xspace}  

\newcommand \discri{discriminant\xspace}
\newcommand \discris{discriminants\xspace}

\newcommand \dok {Dedekind domain\xspace}
\newcommand \doks {Dedekind domains\xspace}

\newcommand \dve {divisibility\xspace}

\newcommand \dvz {zero divisor\xspace}
\newcommand \dvzs {zero divisors\xspace}

\newcommand \eco{\com \elts}  

\newcommand \egt{equality\xspace} 
\newcommand \egts{equalities\xspace} 

\newcommand \elr{elementary\xspace}  

\newcommand \elt{element\xspace}  
\newcommand \elts{elements\xspace}  

\def \endo {endomorphism\xspace}
\def \endos {endomorphisms\xspace}

\newcommand \entrel {entailment relation\xspace}
\newcommand \entrels {entailment relations\xspace}

\newcommand \eqv  {equivalent\xspace}

\newcommand \evc{vector space\xspace} 
\newcommand \evcs{vector spaces\xspace} 


\newcommand \fab {bounded \fcn}
\newcommand \fabs {bounded \fcns}

\newcommand \fac {total \fcn}
\newcommand \facz {total \fcnz}

\newcommand \fap {partial \fcn}
\newcommand \faps {partial \fcns}

\newcommand \fcn {factorisation\xspace}
\newcommand \fcns {factorisations\xspace}

\newcommand \fdi{strongly discrete\xspace} 


\newcommand\gmq{geometric\xspace}

\newcommand\gne{generalized\xspace}

\newcommand\gnl{general\xspace}

\newcommand\gnlt{generally\xspace}

\newcommand\gnn{generalisation\xspace}
\newcommand\gnns{generalisations\xspace}

\newcommand\gnq{generic\xspace}

\newcommand\grl{$\ell$-group\xspace}
\newcommand\grls{$\ell$-groups\xspace}

\newcommand \gtr{generator\xspace}  
\newcommand \gtrs{generators\xspace}  


\newcommand \homo {homomorphism\xspace}
\newcommand \homos {homomorphisms\xspace}

\newcommand \id {ideal\xspace}
\newcommand \ids {ideals\xspace}

\newcommand \idd {de\-ter\-mi\-nantal \id}
\newcommand \idds {de\-ter\-mi\-nantal \ids}
\newcommand \iddz {de\-ter\-mi\-nantal \idz}
\newcommand \iddsz {de\-ter\-mi\-nantal \idsz}

\newcommand \idema {maximal \id}
\newcommand \idemas {maximal \ids}

\newcommand \idep {prime \id}
\newcommand \ideps {prime \ids}

\newcommand \idemi {minimal prime\xspace}
\newcommand \idemis {minimal primes\xspace}

\newcommand \idf {Fitting \id}
\newcommand \idfs {Fitting \ids}

\newcommand \idm {idempotent\xspace}
\newcommand \idms {idempotents\xspace}

\newcommand \idtr {indeterminate\xspace}
\newcommand \idtrs {indeterminates\xspace}

\newcommand \ifr {fractional \id}
\newcommand \ifrs {fractional \ids}

\newcommand \itf {\tf \id}
\newcommand \itfs {\tf \ids}

\newcommand \iso {isomorphism\xspace}
\newcommand \isos {isomorphisms\xspace}

\newcommand \iv {invertible\xspace}

\newcommand \lec {reader\xspace}

\newcommand \lgb {local global\xspace}

\newcommand \lin {linear\xspace}

\newcommand \lon {localisation\xspace}
\newcommand \lons {localisations\xspace}

\newcommand \lop {\lot principal\xspace}

\newcommand \losd {\lot \sdz\xspace}

\def \lot {locally\xspace}

\newcommand \mlp {principal \lon matrix\xspace}
\newcommand \mlps {principal \lon matrices\xspace}

\newcommand \mnp {manipulation\xspace}
\newcommand \mnps {manipulations\xspace}
\newcommand \mnr {\elr \mnp}
\newcommand \mnrs {\elr \mnps}

\newcommand \mo {monoid\xspace}
\newcommand \mos {monoids\xspace}
\newcommand \moco {\com \mos}

\newcommand \mpf {\pf module\xspace}
\newcommand \mpfs {\pf modules\xspace}

\newcommand \mpn {\pn matrix\xspace}
\newcommand \mpns {\pn matrices\xspace}

\newcommand \mpr {\pro module\xspace}
\newcommand \mprs {\pro modules\xspace}

\newcommand \mprn {\prn matrix\xspace}
\newcommand \mprns {\prn matrices\xspace}

\newcommand \mptf {\ptf module\xspace}
\newcommand \mptfs {\ptf modules\xspace}

\newcommand \mrc {projective module of constant rank\xspace}
\newcommand \mrcs {projective modules of constant rank\xspace}


\newcommand \ncr{necessary\xspace}

\newcommand \ncrt{necessarily\xspace}

\newcommand \ndz {regular\xspace}

\newcommand \noe {Noetherian\xspace}
\newcommand \noco {\noe\coh}

\newcommand \nst {Nullstellensatz\xspace}
\newcommand \nsts {Nullstellens\"atze\xspace}

\newcommand \odz {Zariski open set\xspace}

\newcommand \oqc {\qc open set\xspace}
\newcommand \oqcs {\qc open sets\xspace}


\newcommand \pa {saturated pair\xspace}
\newcommand \pas {saturated pairs\xspace}

\newcommand \pb{problem\xspace}  
\newcommand \pbs{problems\xspace}

\newcommand \peq {purely equational\xspace}

\newcommand \pf {finitely presented\xspace}

\newcommand \plg {\lgb principle\xspace}
\newcommand \plgs {\lgb principles\xspace}

\newcommand \pn {presentation\xspace}
\newcommand \pns {presentations\xspace}

\newcommand \pol {polynomial\xspace}
\newcommand \pols {polynomials\xspace}

\newcommand \polcar {characteristic \pol}

\newcommand \prc {rank constant \pro}

\newcommand \prmt {précisely\xspace}

\newcommand \prn {projection\xspace}
\newcommand \prns {projections\xspace}

\newcommand \pro {projective\xspace}

\newcommand \proi {potential prime\xspace}
\newcommand \prois {potential primes\xspace}

\newcommand \proc {potential chain\xspace}
\newcommand \procs {potential chains\xspace}

\newcommand \proel {elementary \proc}
\newcommand \proels {elementary \procs}
\newcommand \proelo {\proel of length }
\newcommand \proelos {\proels of length }

\newcommand \prolo {\proc of length }
\newcommand \prolos {\procs of length }

\newcommand \prt {property\xspace}
\newcommand \prts {properties\xspace}

\newcommand \pst {Positivstellensatz\xspace}
\newcommand \psts {Positivstellens\"atze\xspace}

\newcommand \ptf {\tf \pro}


\newcommand \qc {quasi-compact\xspace}

\newcommand \qi {quasi integral\xspace}

\newcommand \rcf {real closed field\xspace}
\newcommand \rcfs {real closed fields\xspace}

\newcommand \rdl {linear dependance relation\xspace}
\newcommand \rdls {linear dependance relations\xspace}

\newcommand \rdi {integral dependance relation\xspace}
\newcommand \rdis {integral dependance relations\xspace}

\newcommand \ri {ring\xspace}
\newcommand \ris {rings\xspace}


\newcommand \sad {dynamical algebraic structure\xspace}
\newcommand \sads {dynamical algebraic structures\xspace}

\newcommand \sdz {without \dvz}
\newcommand \sdzz {without \dvzz}

\newcommand \sgr {\gtr set\xspace}
\newcommand \sgrs {\gtr sets\xspace}

\newcommand \sli {\lin \sys}
\newcommand \slis {\lin \syss}

\newcommand \sys {system\xspace}
\newcommand \syss {systems\xspace}

\newcommand \talg {Horn theory\xspace}
\newcommand \talgs {Horn theories\xspace}

\newcommand \tco {coherent theory\xspace}
\newcommand \tcos {coherent theories\xspace}

\newcommand \tdy {dynamical theory\xspace}
\newcommand \tdys {dynamical theories\xspace}

\newcommand \tel {regular theory\xspace}
\newcommand \tels {regular theories\xspace}

\newcommand \telri {cartesian theory\xspace}
\newcommand \telris {cartesian theories\xspace}

\newcommand \tf {finitely generated\xspace}

\newcommand \tfo {formal theory\xspace}
\newcommand \tfos {theory formelles\xspace}

\newcommand \tgm {\gmq theory\xspace}
\newcommand \tgms {\gmq theories\xspace}

\newcommand \Tho {Theorem\xspace}
\newcommand \tho {theorem\xspace}
\newcommand \thos {theorems\xspace}

\newcommand \tpe {purely equational theory\xspace}

\newcommand \trdi {distributive lattice\xspace}
\newcommand \trdis {distributive lattices\xspace}

\newcommand \vfn {verification\xspace}
\newcommand \vfns {verifications\xspace}

\newcommand \zed {zero-dimensional\xspace}


\newcommand \cov {constructive\xspace}

\newcommand \coma {\cov \maths}
\newcommand \clama {classical \maths}

\renewcommand \cot {constructively\xspace}

\newcommand \mathe {mathematical\xspace}
\newcommand \maths {mathematics\xspace}

\newcommand \matn {mathematician\xspace}

\newcommand \pte {excluded middle principle\xspace}

\newcommand \prco {\cov proof\xspace}
\newcommand \prcos {constructive proofs\xspace}

\newcommand \tcg {compactness theorem\xspace}
\newcommand \Tcgi {The \tcg implies the following result. }
%



\theoremstyle{plain}
\newtheorem{theorem}{Theorem}[section]
\newtheorem{thdef}[theorem]{Theorem and definition}
\newtheorem{lemma}[theorem]{Lemma}
\newtheorem{corollary}[theorem]{Corollary}
\newtheorem{proposition}[theorem]{Proposition}
\newtheorem{propdef}[theorem]{Proposition and definition}
\newtheorem{plcc}[theorem]{Concrete local-global principle}
\newtheorem{fact}[theorem]{Fact}
\newtheorem{algorithm}{Algorithm}

\theoremstyle{definition}
\newtheorem{conjecture}[theorem]{Conjecture}
\newtheorem{definition}[theorem]{Definition}
\newtheorem{definitions}[theorem]{Definitions}
\newtheorem{notation}[theorem]{Notation}
\newtheorem{definota}[theorem]{Definition and notation} 
\newtheorem{convention}[theorem]{Convention}
\newtheorem{problem}[theorem]{Problem}
\newtheorem{question}[theorem]{Question}

\theoremstyle{remark}
\newtheorem{remark}[theorem]{Remark}
\newtheorem{remarks}[theorem]{Remarks}
\newtheorem{comment}[theorem]{Comment}
\newtheorem{comments}[theorem]{Comments}
\newtheorem{example}[theorem]{Example}
\newtheorem{examples}[theorem]{Examples}

\rdb
\label{beginenglish}

\begin{abstract}
We propose a slight modification of the Berlekamp-Massey Algorithm
for obtaining the minimal polynomial of a given linearly recurrent
sequence. Such a modification enables to explain it in a simpler
way and to adapt it to lazy evaluation.
\end{abstract}

\bni MSC 2000: 68W30, 15A03

\mni Key words: Berlekamp-Massey Algorithm. Linearly recurrent
sequences.

\bni {\bf Note:} This paper appeared in AAECC {\bf 17} 1 (2006), 75--82.
Received: 14 September 2005 / Published online: 9 March 2006

\section{Introduction: The usual Berlekamp-Massey algorithm }

Let $\mathbb{K}$ be an arbitrary field. Given a linearly recurrent
sequence, denoted by $\mathcal{S}(x)=\sum_{i=0}^\infty a_i
x^i$, $a_i \in \mathbb{K}$, we wish to compute its minimal
polynomial, denoted by $P(x)$. Recall that if $P(x)$ is given by
$P(x)=\sum_{i=0}^d p_i x^i$ denotes such polynomial, then
$P(x)$ is the polynomial of the smallest degree such that $
\sum_{i=0}^d p_i a_{j+i}=0, $ for all $j$ in $\N$.

Let suppose that the minimal polynomial of $\mathcal{S}(x)$ has
degree bound $n$. Under such hypothesis, the Berlekamp-Massey
Algorithm only requires the first $2n$ coefficients of
$\mathcal{S}(x)$ in order to compute the minimal polynomial. Such
coefficients define the polynomial $S=\sum_{i=0}^{2n-1}a_i\,x^i$.

A large literature can be consulted nowadays in relation to the
Berlekamp's Algorithm.  The (original) Berlekamp's Algorithm was
created for decoding Bose-Chaudhuri-Hocquenghem (BCH) codes in
1968 (see \cite{br}). One year later, the original version of this
algorithm has been simplified by Massey (see \cite{br-ms}). The
similarity of the algorithm to the extended Euclidean Algorithm
can be found in several articles, for instance, in
\cite{cheng}, \cite{dorns}, \cite{milles}, \cite{sugi} and
\cite{welch}. Some more recent interpretations of the
Berlekamp-Massey Algorithm in terms of Hankel Matrices and
Pad\'{e} approximations can be found in \cite{jon} and \cite{pan}.

The usual interpretation of the Berlekamp-Massey Algorithm for
obtaining $P(x)$ is expressed in pseudocode in Algorithm 1.

\begin{algor}[The Usual Berlekamp-Massey Algorithm]\label{BMA}
\Entree  $n \in \N $. The first $2n$ coefficients of a linearly
recurrent sequence defined over $\K$, given by the list
$[a_0,a_1,\ldots,a_{2n-1}]$. The minimal polynomial has degree
bound $n$. 
\Sortie The minimal polynomial $P$ of the sequence.
\Debut 
\Varloc $R,R_0,R_1,V,V_0,V_1,Q$ : \pols in $x$ \hst \#
initialization \hsu $R_0:=x^{2n}$ ;
$R_1:=\sum_{i=0}^{2n-1}a_i\,x^i$ ;  $V_0=0$  ; $V_1=1$ ; \hst \#
loop 
\hsu    \tantque{n \leq \deg(R_1)} \hsd $(Q,R) :=$ quotient
and remainder of $R_0$ divided by $R_1$ ; \hsd         $V :=
V_0-Q\,V_1$ ; \hsd $V_0:=V_1$ ; $V_1:=V$ ; $R_0:=R_1$ ; $R_1:=R$ ;
\hsu \fintantque \hst \# exit \hsu $d:=\max(\deg(V_1),1+\deg(R_1))
$ ; $P:=x^dV_1(1/x)$ ; Return $P:=P/ \mathrm{leadcoeff}(P)$. \fin
\end{algor}

In practice, we must apply the simplification of the extended
Euclidean Algorithm given in \cite{dorns}, to find exactly the
Berlekamp-Massey Algorithm. Such simplification is based on the
fact that initial $R_0$ is equal to $x^{2n}$.

Although Algorithm \ref{BMA} is not complicated, it seems to be no
easy to find a direct and transparent explanation for the
determination of the degree of $P$. In the literature, we think
there is a little confusion with the different definitions of
minimal polynomial and with the different ways of defining the
sequence. Here, we introduce a slight modification of the
algorithm which makes it more comprehensible and natural. We did
not find in the literature such a modification before the first
submission of this article (May 2004). However, we would like to
add that you can also find it in \cite{shoup}, published in 2005,
without any reference.

\section{Some good reasons to modify the usual algorithm}

By the one hand, as it can be observed at the end of Algorithm
\ref{BMA}, we have to compute the (nearly) reverse polynomial of
$V_1$, in order to obtain the right polynomial. The following
example helps us to understand what happens:
$$
\begin{array}{ l}
n\,=\,d\,=\,3,  \\
S\,=\,a_0+a_1x+a_2x^2+a_3x^3+a_4x^4+a_5 x^5\,=\,
1+2x+7x^2-9x^3+2x^4+7x^5,\\ \noalign{\medskip} \mathrm{Algorithm
\, 1}(3,[1,2,7,-9,2,7]) \, \Rightarrow \, P\,=\,x+x^2+x^3,\\
\noalign{\medskip} \mathrm{with\, \,}
V_1\,=\,v_0+v_1x+v_2x^2  \,=\,49/67(1 +  x +   x^2),\\
\noalign{\medskip} \mathrm{and}\,\,R  \,\,\mathrm{such
\,that\,\,}S\,V_1\,=\, R   \mod   x^6,\,\deg(R)\,=\,2\,\\
\noalign{\medskip} \mathrm{which \,\,  implies\,\, that}\\
\noalign{\medskip}
  \begin{array}{ccccccc}
   \mathrm{coeff}(S\,V_1,x,3)&=& a_1v_2+a_2v_1+a_3v_0
   &=& 2v_2+7v_1-9v_0&=&0,\\
\mathrm{coeff}(S\,V_1,x,4)&=&a_2 v_2+a_3v_1+a_4v_0 &=&7 v_2-9v_1+2v_0  &=&0,\\
\mathrm{coeff}(S\,V_1,x,5)&=& a_3v_2 + a_4v_1 + a_5v_0 &=&-9
v_2+2v_1+7v_0 & =& 0.\\
\end{array} \end{array}
$$
Hence, the right degree of $P$ is given by the degree of the last
$R_1$ plus one because $x$ divides $P$. Observe that
$a_0v_2+a_1v_1+a_2v_0 = 490/67 \neq 0$. We would like to obtain
directly the desired polynomial from $V_1$.

Moreover, by the other hand, in Algorithm \ref{BMA} all the first
$2\,n$ coefficients are required to start the usual algorithm,
where $n$ only provides a degree bound for the minimal polynomial.
Consequently, it may be possible that the true degree of $P$ is
much smaller that $n$ and so, less coefficients of the sequence
are required to obtain the wanted polynomial.

So, we suggest a more natural, efficient and direct way to obtain
$P$. Our idea is to consider the polynomial
$\widehat{S}=\sum_{i=0}^{2n-1}a_i\,x^{2n-1-i}$ as the initial
$R_1$. Observe that in this case, using the same notation as in
Algorithm \ref{BMA}, the same example shows that it is not
necessary to reverse the polynomial $V_1$ at the end of the
algorithm.

$$
\begin{array}{ l}
n\,=\,d\,=\,3,  \\
\widehat{S}\,=\,a_0x^5+a_1x^4+a_2x^3+a_3x^2+a_4x +a_5 \,=\,
x^5+2\,x^4+7\,x^3-9\,x^2+2\,x+7,\\
\noalign{\medskip} \mathrm{Algorithm
\, 2}\,(3,[1,2,7,-9,2,7]) \, \Rightarrow \, P\,=\,x+x^2+x^3,\\
\noalign{\medskip} \mathrm{with\, \,}
V_1\,=\,v_0 + v_1x+v_2x^2 +v_3x^3 \,=\,-9/670 ( x + x^2+x^3),\\
\noalign{\medskip} \mathrm{and}\,\,R  \,\,\mathrm{such
\,that\,\,}\widehat{S}\,V_1\,=\, R   \mod   x^6,\,\deg(R)\,=\,2\\
\noalign{\medskip} \mathrm{which \,\,  implies\,\, that}\\
\noalign{\medskip}
\begin{array}{ccccccc}
   \mathrm{coeff}(\widehat{S}\,V_1,x,3)&=& a_2v_0+a_3v_1+a_4v_2+a_5v_3&=&
-9v_1+2v_2+7v_3 &=&0,\\
\mathrm{coeff}(\widehat{S}\,V_1,x,4)&=&a_1v_0+a_2v_1+a_3v_2+a_4v_3&=&7v_1-9v_2+2v_3
&=&0,\\
\mathrm{coeff}(\widehat{S}\,V_1,x,5)&=&
a_0v_0+a_1v_1+a_2v_2+a_3v_3&=&2v_1+7v_2  -9v_3 &=&0.\\
\end{array}
\end{array}
$$
Furthermore, when $ n \gg \mathrm{deg}(P)$, the algorithm can
admit a lazy evaluation. In other words, the algorithm can be
initiated with less coefficients than $2n$ and if the outcome does
not provide the wanted polynomial, we increase the number of
coefficients but remark that it is not necessary to initiate again
the algorithm because we can take advantages of the computations
done before. We will explain this application of the algorithm in
Section \ref{din_eval}.

Next, we introduce our modified Berlekamp-Massey Algorithm in
pseudocode (Algorithm \ref{BMA2}):

\newpage

\begin{algor}[Modified Berlekamp-Massey Algorithm]\label{BMA2}
\Entree  $n \in \N$. The first $2n$ coefficients of a linearly
recurrent sequence defined over $\K$, given by the list
$[a_0,a_1,\ldots,a_{2n-1}]$. The minimal polynomial has degree
bound $n$. \Sortie  The minimal polynomial $P$ of the sequence.
\Debut \Varloc $R,R_0,R_1,V,V_0,V_1,Q$ : \pols in $x$ ; $m=2n-1$ :
integer. \hst \# initialization \hsu $m:=2n-1$ ; $R_0:=x^{2n}$ ;
$R_1:=\sum_{i=0}^m a_{m-i}\,x^i$ ; $V_0=0$ ;  $V_1=1$ ; \hst \#
  loop \hsu    \tantque{n \leq  \deg(R_1)} \hsd        $(Q,R) :=$
quotient and remainder of $R_0$ divided by $R_1$; \hsd    $V
:=V_0-Q\,V_1$ ; \hsd    $V_0:=V_1$ ; $V_1:=V$ ; $R_0:=R_1$ ;
$R_1:=R$ ; \hsu \fintantque \hst \# exit \hsu Return
$P:=V_1/\cd(V_1)$;
  \fin
\end{algor}

Now we prove our result. Let \,$\s{a}=(a_n)_{n\in \N}$\ be an
arbitrary list and \,$i,$ $r,$ $p\in\N$. Let $\rH^{\s{a}}_{i,r,p}$
denote the following Hankel matrix of order $r \times p$,
$$\rH^{\s{a}}_{i,r,p}=
\left[ \begin{array}{ccccc}
a_i & a_{i+1} & a_{i+2}   &\ldots & a_{i+p-1}  \\
a_{i+1} & a_{i+2} &       &       & a_{i+p}  \\
a_{i+2} &     &       &       &     \\
\vdots &  &       &       & \vdots   \\
a_{i+r-1}&a_{i+r}&\ldots&\ldots & a_{i+r+p-2}
\end{array}\right]\,
$$
and let $\rP^{\s{a}}(x)$ be the minimal polynomial of $\s a$.

The next proposition shows the well known relation between the
rank of Hankel matrix and the sequence.

\begin{proposition} \label{propSRL}
Let \,$\s{a}$\, be a \srl. If \,$\s{a}$\, has a generating
polynomial of degree $\leq n$, then the degree $d$ of its minimal
polynomial  $\rP^{\s{a}}$ is equal to the rank of the Hankel
matrix
$$
\rH^{\s{a}}_{0,n,n}= \left[ \begin{array}{ccccccc}
a_0 & a_{1} & a_{2}   &\cdots &  a_{n-2}& a_{n-1} \\
a_{1} & a_{2} &       & \adots  &a_{n-1} & a_n \\
a_{2} &     & \adots & \adots & \vdots  &\vdots    \\
\vdots &\adots   & \adots  &       & \vdots  & \vdots\\
a_{n-2}&a_{n-1}&\cdots&\cdots & a_{2n-2}&a_{2n-1}\\
a_{n-1}&a_n&\cdots&\cdots & a_{2n-1}&a_{2n-2} \\
\end{array}
\right] \; .$$ The coefficients of
\,$\rP^{\s{a}}(x)=x^d-\sum_{i=0}^{d-1}g_ix^i\in\K[x]\,$ are
provided by the unique solution of the linear system
$$
\rH^{\s{a}}_{0,d,d}\,G=\rH^{\s{a}}_{d,d,1},
$$
that is,

\begin{equation} \label{EqSRL3}
\left[ \begin{array}{ccccc}
a_0 & a_1 & a_2   &\cdots & a_{d-1}  \\
a_1 & a_2 &       & \adots  & a_d  \\
a_2 &     & \adots & \adots & \vdots    \\
\vdots &\adots   & \adots  &       & \vdots   \\
a_{d-1}&a_d&\cdots&\cdots & a_{2d-2}
\end{array}\right] \;
\left[ \begin{array}{c}
g_0 \\ g_1 \\ g_2 \\  \vdots   \\  g_{d-1}
\end{array}\right]
=
\left[ \begin{array}{c}
a_d \\  a_{d+1} \\ a_{d+2} \\  \vdots \\ a_{2d-1}
\end{array}\right] \,.
\end{equation}
\end{proposition}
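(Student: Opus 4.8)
The plan is to reduce everything to classical statements about linear recurrences and Hankel matrices. Write $\s a$ for the sequence and let $d$ be the degree of its minimal polynomial $\rP^{\s a}$. First I would establish the two inequalities $\mathrm{rank}(\rH^{\s a}_{0,n,n})\le d$ and $\mathrm{rank}(\rH^{\s a}_{0,n,n})\ge d$ separately, and only then derive the explicit linear system.

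For the upper bound, I would use the defining recurrence: since $\rP^{\s a}(x)=x^d-\sum_{i=0}^{d-1}g_i x^i$ annihilates $\s a$, every $a_{k}$ with $k\ge d$ is a fixed $\K$-linear combination (with coefficients $g_0,\dots,g_{d-1}$) of the $d$ preceding terms $a_{k-d},\dots,a_{k-1}$. Reading the columns of $\rH^{\s a}_{0,n,n}$ left to right, each column from the $(d+1)$-st onward is therefore the same linear combination of the $d$ columns immediately to its left; hence the column space is spanned by the first $d$ columns, so $\mathrm{rank}\le d$. (Here one uses $n\ge d$, which holds because the hypothesis gives a generating polynomial of degree $\le n$, and the minimal polynomial divides it, so $d\le n$.)

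For the lower bound I would argue by contradiction: if $\mathrm{rank}(\rH^{\s a}_{0,n,n})=r<d$, then in particular $\mathrm{rank}(\rH^{\s a}_{0,d,d})=r<d$ as well (it is a submatrix, and by the column-dependence just described the rank does not grow when we pass from the $d\times d$ to the $n\times n$ block — all extra columns lie in the span of the first $r$). A rank drop in the first $r+1$ columns of the infinite Hankel array produces a nonzero linear dependence among columns $0,\dots,r$; translating this back into a recurrence shows $\s a$ satisfies a linear recurrence of order $\le r<d$, contradicting minimality of $\rP^{\s a}$. This direction is the delicate point: one must make sure that a dependence of the \emph{first $r+1$ finite columns} really does propagate to all later columns of the infinite array, which again is exactly the recurrence coming from the upper-bound argument, so the two halves dovetail. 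Care is needed that the finitely many available coefficients $a_0,\dots,a_{2n-1}$ suffice to see the relation; since $2d-1\le 2n-1$ this is fine.

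Having $d=\mathrm{rank}(\rH^{\s a}_{0,n,n})=\mathrm{rank}(\rH^{\s a}_{0,d,d})$, it remains to identify the coefficients. The recurrence $a_{d+j}=\sum_{i=0}^{d-1}g_i a_{i+j}$ for $j=0,1,\dots,d-1$ is precisely the matrix equation $\rH^{\s a}_{0,d,d}\,G=\rH^{\s a}_{d,d,1}$, i.e.\ \eqref{EqSRL3}. Since $\rH^{\s a}_{0,d,d}$ has rank $d$ it is invertible, so $G=(g_0,\dots,g_{d-1})^{\mathsf T}$ is the unique solution, which finishes the proof. I expect the main obstacle to be the careful bookkeeping in the lower-bound step — specifically, showing that a rank deficiency confined to a finite Hankel block forces a genuine recurrence of too-small order on the whole sequence, rather than just on a truncation.
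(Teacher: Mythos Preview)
The paper does not actually prove this proposition. In the English version it is simply stated and then used; in the French version it is introduced with ``La proposition suivante est classique (voir par exemple \cite{jon})'' and again given without proof. So there is no ``paper's own proof'' to compare against: the authors treat \textsl{propSRL} as a known fact about Hankel matrices and linear recurrences, deferring to the literature (Jonckheere--Ma).

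Your sketch is the standard argument and is essentially correct. Two small points of bookkeeping are worth tightening. First, when you write ``$\mathrm{rank}(\rH^{\s a}_{0,d,d})=r$ as well (it is a submatrix\ldots)'', being a submatrix only gives $\le r$; the equality actually holds, but it follows from the column/row dependences you established in the upper-bound step (the extra rows and columns beyond index $d$ add nothing to the span), not from the submatrix inclusion alone. Second, in the lower-bound step you should be explicit that you take $r$ to be the \emph{least} index for which columns $0,\dots,r$ of $\rH^{\s a}_{0,n,n}$ are dependent; this is what guarantees that the resulting relation expresses column $r$ in terms of the preceding $r$ columns and hence reads as a genuine order-$r$ recurrence on the visible coefficients. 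Your propagation argument (using the known degree-$d$ recurrence to push the order-$r$ relation from $j\le n-1$ to all $j$) is the right mechanism and works by the induction you indicate.
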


As an immediate corollary of Proposition \ref{propSRL}, we have
the following result.

\begin{corollary}
Using the notation of Proposition \ref{propSRL}, a vector
$Y=(p_0,\ldots,p_n)$ is solution of
$$
\rH^{\s{a}}_{0,n,n+1}\,Y=0,
$$
that is,
\begin{equation} \label{EqSRL5}
\left[ \begin{array}{cccccc}
a_0 & a_1 & a_2   &\cdots & a_{n-1}&a_n  \\
a_1 & a_2 &       & \adots  & a_n& a_{n+1}  \\
a_2 &     & \adots & \adots & \vdots& \vdots    \\
\vdots &\adots   & \adots  &       & \vdots& \vdots   \\
a_{n-1}&a_n&\cdots&\cdots & a_{2n-2} & a_{2n-1}
\end{array}\right] \;
\left[ \begin{array}{c}
p_0 \\ p_1 \\ p_2 \\  \vdots   \\  p_{n-1}\\  p_{n}
\end{array}\right]
= 0
\end{equation}
if and only if the polynomial $P(x)=\sum_{i=0}^n p_ix^i\in\K[x]$
is multiple of $\,\rP^{\s{a}}(x)$.
\end{corollary}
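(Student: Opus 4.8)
\noindent The plan is to exhibit two subspaces of $\K^{n+1}$, one obviously contained in the other, and to conclude that they coincide by a dimension count. Let $K=\ker\rH^{\s{a}}_{0,n,n+1}$ be the solution space of the system $\rH^{\s{a}}_{0,n,n+1}\,Y=0$, and let $M\subseteq\K^{n+1}$ be the set of coefficient vectors $(p_0,\dots,p_n)$ of the polynomials $P(x)=\sum_{i=0}^{n}p_ix^{i}$ of degree $\leq n$ that are divisible by $\rP^{\s{a}}$. Write $d=\deg\rP^{\s{a}}$. Since $\rP^{\s{a}}$ divides every generating polynomial of $\s{a}$ --- equivalently, by Proposition \ref{propSRL}, since $d=\mathrm{rank}\,\rH^{\s{a}}_{0,n,n}$ --- we have $d\leq n$; and, writing $\rP^{\s{a}}(x)=x^{d}-\sum_{i=0}^{d-1}g_ix^{i}$, the fact that $\rP^{\s{a}}$ is a generating polynomial of $\s{a}$ gives the recurrence $a_{k+d}=\sum_{i=0}^{d-1}g_ia_{k+i}$ for all $k\in\N$.

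\smallskip\noindent First I would check the inclusion $M\subseteq K$. If $P=Q\,\rP^{\s{a}}$ with $\deg P\leq n$, then $P$ is again a generating polynomial of $\s{a}$: expanding the product $Q\,\rP^{\s{a}}$ and using that $\rP^{\s{a}}$ annihilates $\s{a}$ gives $\sum_{i=0}^{n}p_ia_{k+i}=0$ for every $k\in\N$, in particular for $k=0,\dots,n-1$, which is precisely $\rH^{\s{a}}_{0,n,n+1}\,Y=0$. Hence $M\subseteq K$.

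\smallskip\noindent Next I would compare dimensions. Multiplication by $\rP^{\s{a}}$ is injective on $\K[x]$ and carries the polynomials of degree $\leq n-d$ onto the multiples of $\rP^{\s{a}}$ of degree $\leq n$, so $\dim M=n-d+1$. For $K$, I claim $\mathrm{rank}\,\rH^{\s{a}}_{0,n,n+1}=d$. On one hand it is $\geq d$, because $\rH^{\s{a}}_{0,n,n}$, which has rank $d$ by Proposition \ref{propSRL}, is the submatrix obtained by deleting the last column. On the other hand it is $\leq d$: using the recurrence above, an immediate induction on the column index $j$ shows that column $j$ of $\rH^{\s{a}}_{0,n,n+1}$ is a $\K$-linear combination of columns $j-d,\dots,j-1$, hence of columns $0,\dots,d-1$, for every $j$ with $d\leq j\leq n$. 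Therefore $\dim K=(n+1)-d=\dim M$, and combined with $M\subseteq K$ this forces $M=K$, which is exactly the asserted equivalence.

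\smallskip\noindent I expect no real obstacle here once Proposition \ref{propSRL} is available: the heart of the argument is the rank computation $\mathrm{rank}\,\rH^{\s{a}}_{0,n,n+1}=d$, and within it the only point requiring care is the bookkeeping of index shifts in the column recurrence --- one must check that every coefficient $a_\bullet$ occurring there has subscript in $\{0,\dots,2n-1\}$, which holds since $j\leq n$ and $0\leq i\leq d-1\leq n-1$. The auxiliary fact used in the second paragraph, namely that a polynomial multiple of a generating polynomial is again a generating polynomial, is elementary (expand the product, or note that the generating polynomials form the principal ideal generated by $\rP^{\s{a}}$).
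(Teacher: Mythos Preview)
Your proof is correct and follows essentially the same route as the paper's: compute the kernel dimension from the rank, show that the coefficient vectors of multiples of $\rP^{\s{a}}$ lie in the kernel, and finish by a dimension count. The paper carries this out by writing down the explicit basis vectors $[-g_0,\dots,-g_{d-1},1,0,\dots,0]^t,\ldots,[0,\dots,0,-g_0,\dots,-g_{d-1},1]^t$ of the kernel (these are precisely the coefficient vectors of $\rP^{\s{a}},\,x\rP^{\s{a}},\,\dots,\,x^{n-d}\rP^{\s{a}}$), whereas you package the same content as the inclusion $M\subseteq K$ together with $\dim M=\dim K$; your treatment of the rank of $\rH^{\s{a}}_{0,n,n+1}$ is in fact a bit more explicit than the paper's, which simply invokes Proposition~\ref{propSRL}.
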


\begin{proof}
By Proposition \ref{propSRL} the dimension of $\Ker(\rH^{\s{a}}_{0,n,n+1})$
is $n-d$.
For $ 0 \leq j \leq n-1 $, let $C_j$ denote the $j$th column of
$\rH^{\s{a}}_{0,n,n+1}$, that is
$C_j=\rH^{\s{a}}_{j,n,1}=[a_j,a_{j+1},\ldots,a_{n+j-1}]^t$. Since
$\,\rP^{\s{a}}(x)$ is a generating polynomial of
$\s{a}$, for $d\leq j\leq n-1$, we obtain that
$$
C_j - \sum\nolimits_{i=j-d}^{j-1}g_{i-j+d}\,C_i=0.
$$
Thus the  linear independent columns
$[-g_0,\ldots,-g_{d-1},1,0,\ldots,0]^t,\ldots,[0,\ldots,0,-g_0,\ldots,-g_{d-1},1]^t$
  define a basis of  $\Ker(\rH^{\s{a}}_{0,n,n+1})$. Therefore,
$Y=(p_0,\ldots,p_n)$ verifies $\rH^{\s{a}}_{0,n,n+1}\,Y=0$ if and
only if the polynomial $P(x)=\sum_{i=0}^n p_ix^i $ is a multiple
of $\,\rP^{\s{a}}(x)$.
\end{proof}

If we consider $m=2n-1$ and $\widehat{S}=\sum_{i=0}^m a_{m-i}x^i$,
by applying Equation (\ref{EqSRL5}) we obtain:

\begin{equation} \label{EqSRL6}
\exists \,R,\,U\in\K[x]\;\mathrm{\, such \; that}\;
\deg(R)<n,\,\deg(P)\leq n \,\,\, \mathrm{and}\,\,\,
P(x)S(x)+U(x)x^{2n}= R(x).
\end{equation}
Hence, it turns out that finding the minimal polynomial of $\s{a}$
is equivalent to solving (\ref{EqSRL6}) for the minimum degree of
$P$. Moreover, it's well known that

\begin{itemize}
\item the extended Euclidean Algorithm, with $x^{2n}$ and
$\widehat{S}$, provides an equality as (\ref{EqSRL6}) when the
first remainder of degree smaller than $<n$ is reached. Let denote
such remainder by $R_k$,

\item if we consider
  other polynomials $P'(x), \, U'(x)$ and $R'(x)$ such that
$P'(x)\widehat{S}(x)+U'(x)x^{2n}= R'(x)$ and $\deg(R') <
\deg(R_{k-1})$, then   $\deg(P')\geq \deg(P)$ and $\deg(U')\geq
\deg(U)$.
\end{itemize}

That proves that our modification of Berlekamp-Massey Algorithm is
right.

\section{Lazy Evaluation}\label{din_eval}

Our modified Berlekamp-Massey Algorithm admits a lazy evaluation,
which may be very useful in solving the following problem.

Let $f(x)\in \K[x]$ be a squarefree polynomial of degree $n$. Let
$B$ be the universal decomposition algebra of $f(x)$, let $A$ be a
quotient algebra of $B$ and $a \in A$. Thus, $A$ is a
zero--dimensional algebra given by
$$A\simeq\K[X_1,\ldots ,X_n]/\gen{f_1,\ldots ,f_n},
$$
where $f_1,\ldots ,f_n$ define a Gr\"obner basis. Our aim is to
compute the minimal polynomial of $a$, or at least, one of its
factors. However, the dimension of $A$, denoted by $m$, over $\K$
as vector space is normally too big to manipulate matrices of
order $m$. Therefore, we apply the idea of Wiedemann's Algorithm,
by computing the coefficients of a linearly recurrent sequence,
$a_t=\phi(x^t)$, where $\phi$ is a linear form over $A$. Moreover,
since the computation of $x^t$ is usually very expensive and the
minimal polynomial is likely to have degree smaller than the
dimension, we are interested in computing the smallest possible
number of coefficients in order to get the wanted polynomial.

Hence, we first choose $l < m$. We start Algorithm \ref{BMA2} with
$l$ and $[\phi(x^0),\ldots,\phi(x^{2l-1})]$ as input, obtaining a
polynomial as a result. Now, we test if such a polynomial is the
minimal one. If this is not the case, we choose again another
$l'$, $l < l'\leq m$, and we repeat the process with $2l'$
coefficients. However, in this next step, it is possible to take
advantages of all the quotients computed before (with the
exception of the last one), such that Euclidean Algorithm starts
  at $R_0=U_0 x^{2l'}+V_0 \sum_{i=0}^{2l'-1}
(\phi(x^{2l'-1-i}) x^i)$ and $R_1=U_1 x^{2l'}+V_1
\sum_{i=0}^{2l'-1}( \phi(x^{2l'-1-i}) x^i)$, where $U_0$,
$V_0$, $U_1$ and $V_1$ are Bezout coefficients computed in the
previous step. Manifestly, repeating this argument again and
again, we obtain the minimal polynomial.

The following pseudocode tries to facilitate the understanding of
our lazy version of Berlekamp-Massey Algorithm.

Obviously, the choice of $l$ is not unique. Here we have started
at $l=m/4$, adding two coefficients in every further step. In
practice, the particular characteristics of the given problem
could help to choose a proper $l$ and the method of increasing it
through the algorithm. Of course, the simplification of the
Euclidean Algorithm in \cite{dorns} must be considered to optimize
the procedure.

\newpage

 \begin{algorH}[The lazy Berlekamp-Massey Algorithm (in some
particular context)]\label{BMAP}
\Entree  $ m \in \N $, $C \in \K^n$, $G$: Gr\"obner basis, $a \in
A$. The minimal polynomial has degree bound $m$.
\Sortie The minimal polynomial $P$ of $a$
\Debut \Varloc $l,i$: integers,
$R,R_{-1},R_0,R_1,V,V_{-1},V_0,V_1,U,U_{-1},U_0,U_1, S_0,S_1,Q$ :
\pols in $x$, $L,W$:lists, \(val\): element of \(A\);
\hst \# initialization
  \hsu
$ l =\lfloor m/4 \rfloor$;
\hsu $L:=[1,a]$;
$W:=[1,\mathrm{Value}(a,C)]$;
\hsu $S_0:=x^{2l}$ ; $S_1 =W[1]\,x^{2l-1}+W[2]\,x^{2l-2}$;
  \hst \# loop
\hsu \pour{i}{3}{2l}\hsd
         $L[i]:=\mathrm{normalf}(L[i-1]a ,G);$
\hsd           $V[i]:=\mathrm{Value}( L[i],C);$
\hsd         $S_1 = S_1 + V[i] x^{2l-i};$
\hsu
\finpour \hsu
$R_0:=S_0; R_1:=S_1$;
  $V_0=0$ ; $V_1=1$ ; $U_0=1$ ; $V_1=0$;
   \hst \# loop \hsu
\tantque{l \leq \deg(R_1)} \hsd $(Q,R) :=$ quotient and remainder
of $R_0$ divided by $R_1$ ; \hsd $V := V_0-Q V_1$; $U := U_0-Q
U_1$ ;  $U_{-1}:= U_0$; $V_{-1}:= V_0$; \hsd $V_0:=V_1$ ; $V_1:=V$
;$U_0:=U_1$ ; $U_1:=U$;
  $R_0:=R_1$ ; $R_1:=R$ ;
\hsu
\fintantque
  \hsu
\(val:=\mathrm{Subs}(x=a,V_1)\);
   \hst \# loop \hsu
\tantque{val\neq 0 } \hsd $l:=l+1;$
  \hsd \# loop
\hsd \pour{i}{2l-1}{2l} \hst
         $L[i]:=\mathrm{normalf}(L[i-1]a ,G)$; \hst
         $W[i]:=\mathrm{Value}( L[i],C)$;
\hsd \finpour 
\hsd
  $S_0=x^2 S_0$; $S_1 = x^2S_1 + W[2l-1] x+ W[2l];$
\hsd
$R_0:=U_{-1} S_0+V_{-1} S_1$; $R_1:=U_0 S_0+V_0 S_1$ ; 
  \hsd
$U_1:=U_0; V_1:=V_0; U_0:=U_{-1}; V_0:=V_{-1};$
  \hsd \# loop \hsd
\tantque{l \leq \deg(R_1)} 
\hst $(Q,R) :=$ quotient and remainder
of $R_0$ divided by $R_1$ ; 
\hst $V := V_0-Q V_1$; $U := U_0-Q
U_1$; $U_{-1}:= U_0$; $V_{-1}:= V_0$; 
\hst $V_0:=V_1$ ; $V_1:=V$
;$U_0:=U_1$ ; $U_1:=U$;
  $R_0:=R_1$ ; $R_1:=R$ ;
\hsd
\fintantque
  \hsd
  \(val:=\mathrm{Subs}(x=a,V_1)\)
\hsu \fintantque
\hst \# exit \hsu  Return $P:=V_1/\mathrm{leadcoeff}(P)$. \fin
\end{algorH}

\addcontentsline{toc}{section}{References}

\small

\endgroup
\stopcontents[english]

\clearpage
\newpage
\thispagestyle{empty}

\clearpage
\newpage

\setcounter{page}{1}\renewcommand\thepage{F\arabic{page}}\renewcommand\theHsection{F\arabic{section}}

\clearpage\setcounter{section}{0}\selectlanguage{french}\def\frenchproofname{\textsl{Démonstration}}


\newcommand\por[3]{\textbf{pour } $#1$ \textbf{ de } $#2$
      \textbf{ \`a } $#3$  }
\newcommand\sialors[1]{\textbf{si } $#1$ \textbf{ alors }}
\newcommand\tantque[1]{\textbf{tant que } $#1$ \textbf{ faire }}
\newcommand\finpour{\textbf{fin pour}}
\newcommand\sinon{\textbf{sinon }}
\newcommand\finsi{\textbf{fin si }}
\newcommand\fintantque{\textbf{fin tant que }}
\newcommand\Debut{\\[1mm] \textbf{D\'ebut }}
\newcommand\fin{\textbf{\\ Fin.}}
\newcommand\Entree{\\[1mm] \textbf{Entr\'ee : }}
\newcommand\Sortie{\\ \textbf{Sortie : }}
\newcommand\Varloc{\\ \textbf{Variables locales : }}



\newcounter{MF}
\newcommand\stMF{\stepcounter{MF}}

\newcommand{\lec}{\stMF\ifodd\value{MF}lecteur \else 
lectrice \fi}
\newcommand{\lecz}{\stMF\ifodd\value{MF}lecteur\else lectrice\fi}

\newcommand{\lecs}{\stMF\ifodd\value{MF}lecteurs \else 
lectrices \fi}
\newcommand{\lecsz}{\stMF\ifodd\value{MF}lecteurs\else 
lectrices\fi}

\newcommand{\alec}{\stMF\ifodd\value{MF}au lecteur \else%
à la lectrice \fi}
\newcommand{\alecz}{\stMF\ifodd\value{MF}au lecteur\else%
à la lectrice\fi}

\newcommand{\dlec}{\stMF\ifodd\value{MF}du lecteur \else%
de la lectrice \fi}
\newcommand{\dlecz}{\stMF\ifodd\value{MF}du lecteur\else%
de la lectrice\fi}

\newcommand{\llec}{\stMF\ifodd\value{MF}le lecteur \else la lectrice \fi}
\newcommand{\llecz}{\stMF\ifodd\value{MF}le lecteur\else la lectrice\fi}

\newcommand{\Llec}{\stMF\ifodd\value{MF}Le lecteur \else La lectrice \fi}

\newcommand{\lui}{\ifodd\value{MF}lui \else
elle \fi}
\newcommand{\luiz}{\ifodd\value{MF}lui\else
elle\fi}

\newcommand{\celui}{\ifodd\value{MF}celui \else
celle \fi}

\newcommand{\ceux}{\ifodd\value{MF}ceux \else
celles \fi}

\newcommand{\er}{\ifodd\value{MF}er \else
ère \fi}

\newcommand{\eux}{\ifodd\value{MF}eux \else
elles \fi}

\newcommand{\eUx}{\ifodd\value{MF}eux \else
euse \fi}

\newcommand{\leux}{\ifodd\value{MF}leux \else
leuse \fi}

\newcommand{\il}{\ifodd\value{MF}il \else
elle \fi}

\newcommand{\ien}{\ifodd\value{MF}ien \else
ienne \fi}

\newcommand{\e}{\ifodd\value{MF} \else e \fi}
\newcommand{\ez}{\ifodd\value{MF}\else e\fi}

\newcommand{\n}{\ifodd\value{MF}n \else nne \fi}
\newcommand{\nz}{\ifodd\value{MF}n\else nne\fi}

\makeatletter
\newcommand{\la}{\@ifstar{\ifodd\value{MF}le\else
la\fi}{\stMF\ifodd\value{MF}le \else la \fi}}
\makeatother

\newcommand \rem{\rdb
\noi{\sl Remarque. }}

\newcommand \REM[1]{\rdb
\noi{\sl Remarque#1. }}

\newcommand \rems{\rdb
\noi{\sl Remarques. }}

\newcommand \exl{\rdb
\noi{\bf Exemple. }}

\newcommand \EXL[1]{\rdb
\noi{\bf Exemple: #1. }}

\newcommand \exls{\rdb
\noi{\bf Exemples. }}

\newcommand \thref[1] {théorème~\ref{#1}}
\newcommand \paref[1] {page~\pageref{#1}}
\newcommand \pstfref[1] {Posi\-tiv\-stel\-lensatz formel~\ref{#1}}
\newcommand \pstref[1] {Posi\-tiv\-stel\-lensatz~\ref{#1}}

\newcommand\oge{\leavevmode\raise.3ex\hbox{$\scriptscriptstyle\langle\!\langle\,$}}
\newcommand\feg{\leavevmode\raise.3ex\hbox{$\scriptscriptstyle\,\rangle\!\rangle$}}

\newcommand\gui[1]{\oge{#1}\feg}

\newcommand \facile{\begin{proof}
La démonstration est laissée \alecz.
\end{proof}
}

\newcommand\comm{\rdb
\noi{\it Commentaire. }}

\newcommand\COM[1]{\rdb
\noi{\it Commentaire #1. }}

\newcommand\comms{\rdb
\noi{\it Commentaires. }}

\newcommand\Pb{\rdb
\noi{\bf Problème. }}

\newcommand\eoq{\hbox{}\nobreak
\vrule width 1.4mm height 1.4mm depth 0mm}

\newcommand \Cad {C'est-à-dire\xspace}
\newcommand \recu {récur\-rence\xspace}
\newcommand \hdr {hypo\-thèse de \recu}
\newcommand \cad {c'est-à-dire\xspace}
\newcommand \cade {c'est-à-dire en\-co\-re\xspace}
\newcommand \ssi {si, et seu\-lement si, }
\newcommand \ssiz {si, et seu\-lement si,~}
\newcommand \cnes {con\-di\-tion néces\-sai\-re et suf\-fi\-san\-te\xspace}
\newcommand \spdg {sans per\-te de géné\-ra\-lité\xspace}
\newcommand \Spdg {Sans per\-te de géné\-ra\-lité\xspace}

\newcommand \Propeq {Les pro\-pri\-é\-tés sui\-van\-tes sont 
équi\-va\-len\-tes.}
\newcommand \propeq {les pro\-pri\-é\-tés sui\-van\-tes sont 
équi\-va\-len\-tes.}

\newcommand \Kev {$\gK$-\evc}
\newcommand \Kevs {$\gK$-\evcs}

\newcommand \Lev {$\gL$-\evc}
\newcommand \Levs {$\gL$-\evcs}

\newcommand \Qev {$\QQ$-\evc}
\newcommand \Qevs {$\QQ$-\evcs}

\newcommand \kev {$\gk$-\evc}
\newcommand \kevs {$\gk$-\evcs}

\newcommand \lev {$\gl$-\evc}
\newcommand \levs {$\gl$-\evcs}

\newcommand \Alg {$\gA$-\alg}
\newcommand \Algs {$\gA$-\algs}

\newcommand \Blg {$\gB$-\alg}
\newcommand \Blgs {$\gB$-\algs}

\newcommand \Clg {$\gC$-\alg}
\newcommand \Clgs {$\gC$-\algs}

\newcommand \klg {$\gk$-\alg}
\newcommand \klgs {$\gk$-\algs}

\newcommand \llg {$\gl$-\alg}
\newcommand \llgs {$\gl$-\algs}

\newcommand \Klg {$\gK$-\alg}
\newcommand \Klgs {$\gK$-\algs}

\newcommand \Llg {$\gL$-\alg}
\newcommand \Llgs {$\gL$-\algs}

\newcommand \QQlg {$\QQ$-\alg}
\newcommand \QQlgs {$\QQ$-\algs}

\newcommand \Rlg {$\gR$-\alg}
\newcommand \Rlgs {$\gR$-\algs}

\newcommand \RRlg {$\RR$-\alg}
\newcommand \RRlgs {$\RR$-\algs}

\newcommand \ZZlg {$\ZZ$-\alg}
\newcommand \ZZlgs {$\ZZ$-\algs}

\newcommand \Amo {$\gA$-mo\-du\-le\xspace}
\newcommand \Amos {$\gA$-mo\-du\-les\xspace}

\newcommand \Bmo {$\gB$-mo\-du\-le\xspace}
\newcommand \Bmos {$\gB$-mo\-du\-les\xspace}

\newcommand \Cmo {$\gC$-mo\-du\-le\xspace}
\newcommand \Cmos {$\gC$-mo\-du\-les\xspace}

\newcommand \kmo {$\gk$-mo\-du\-le\xspace}
\newcommand \kmos {$\gk$-mo\-du\-les\xspace}

\newcommand \Kmo {$\gK$-mo\-du\-le\xspace}
\newcommand \Kmos {$\gK$-mo\-du\-les\xspace}

\newcommand \Lmo {$\gL$-mo\-du\-le\xspace}
\newcommand \Lmos {$\gL$-mo\-du\-les\xspace}

\newcommand \Ali {appli\-ca\-tion $\gA$-\lin}
\newcommand \Alis {appli\-ca\-tions $\gA$-\lins}

\newcommand \Kli {appli\-ca\-tion $\gK$-\lin}
\newcommand \Klis {appli\-ca\-tions $\gK$-\lins}

\newcommand \Bli {appli\-ca\-tion $\gB$-\lin}
\newcommand \Blis {appli\-ca\-tions $\gB$-\lins}

\newcommand \Cli {appli\-ca\-tion $\gC$-\lin}
\newcommand \Clis {appli\-ca\-tions $\gC$-\lins}

\newcommand \ac{algé\-bri\-quement clos\xspace}  

\newcommand \acl {an\-neau \icl}
\newcommand \acls {an\-neaux \icl}

\newcommand \adp {an\-neau de Pr\"u\-fer\xspace}
\newcommand \adps {an\-neaux de Pr\"u\-fer\xspace}

\newcommand \adpc {\adp \coh}
\newcommand \adpcs {\adps \cohs}

\newcommand \adu {\alg de décom\-po\-sition univer\-selle\xspace}
\newcommand \adus {\algs de décom\-po\-sition univer\-selle\xspace}

\newcommand \adv {anneau de valuation\xspace}
\newcommand \advs {anneaux de valuation\xspace}

\newcommand \advl {anneau \dvla} 
\newcommand \advls {anneaux \dvlas} 

\newcommand \Afr {Anneau \frl}
\newcommand \Afrs {Anneaux \frls}
\newcommand \afr {anneau \frl}
\newcommand \aFr {\hyperref[theorieAfr]{anneau \frl}\xspace}
\newcommand \afrs {anneaux \frls}

\newcommand \afrr {\afr réduit\xspace}
\newcommand \afrrs {\afrs réduits\xspace}
\newcommand \Afrrs {\Afrs réduits\xspace}

\newcommand \afrvr {\afr avec \ravs}
\newcommand \aFrvr {\hyperref[theorieAfrrv]{\afrvr}\xspace}
\newcommand \afrvrs {\afrs avec \ravs}

\newcommand \aftr {anneau réticulé \ftm réel\xspace}
\newcommand \aftrs {anneaux réticulés \ftm réels\xspace}

\newcommand \aG {\alg galoisienne\xspace}
\newcommand \aGs {\algs galoisiennes\xspace}

\newcommand \agB {\alg de Boole\xspace}
\newcommand \agBs {\algs de Boole\xspace}

\newcommand \agH {\alg de Heyting\xspace}
\newcommand \agHs {\algs de Heyting\xspace}

\newcommand \agq{algé\-bri\-que\xspace}
\newcommand \agqs{algé\-bri\-ques\xspace}

\newcommand \agqt{algé\-bri\-que\-ment\xspace}

\newcommand \aKr {anneau de Krull\xspace}
\newcommand \aKrs {anneaux de Krull\xspace}

\newcommand \alg {algè\-bre\xspace}
\newcommand \algs {algè\-bres\xspace}

\newcommand \algo{algo\-rithme\xspace}
\newcommand \algos{algo\-rithmes\xspace}

\newcommand \algq{al\-go\-rith\-mi\-que\xspace}
\newcommand \algqs{al\-go\-rith\-mi\-ques\xspace}

\newcommand \ali {appli\-ca\-tion \lin}
\newcommand \alis {appli\-ca\-tions \lins}

\newcommand \alo {an\-neau lo\-cal\xspace}
\newcommand \alos {an\-neaux lo\-caux\xspace}

\newcommand \algb {an\-neau \lgb}
\newcommand \algbs {an\-neaux \lgbs}

\newcommand \alrd {\alo \dcd}
\newcommand \alrds {\alos \dcds}

\newcommand \anar {anneau \ari}
\newcommand \anars {anneaux \aris}

\newcommand \anor {an\-neau nor\-mal\xspace}
\newcommand \anors {an\-neaux nor\-maux\xspace}

\newcommand \apf {\alg \pf}
\newcommand \apfs {\algs \pf}

\newcommand \apG {\alg pré\-galoisienne\xspace}
\newcommand \apGs {\algs pré\-galoisiennes\xspace}

\newcommand \arc {anneau réel clos\xspace}
\newcommand \aRc {\hyperref[theorieArc]{\arc}\xspace}
\newcommand \arcs {anneaux réels clos\xspace}

\newcommand \ari{arith\-mé\-tique\xspace}  
\newcommand \aris{arith\-mé\-tiques\xspace}  

\newcommand \Asr {Anneau \str}
\newcommand \Asrs {Anneaux \strs}
\newcommand \asr {anneau \str}
\newcommand \asrs {anneaux \strs}

\newcommand \asrvr {\asr avec \ravs}
\newcommand \asrvrs {\asrs avec \ravs}

\newcommand \atf {\alg \tf}
\newcommand \atfs {\algs \tf}

\newcommand \auto {auto\-mor\-phisme\xspace}
\newcommand \autos {auto\-mor\-phismes\xspace}


\newcommand \bdg {base de Gr\"obner\xspace}
\newcommand \bdgs {bases de Gr\"obner\xspace}

\newcommand \bdp {base de \dcn partielle\xspace}
\newcommand \bdps {bases de \dcn partielle\xspace}

\newcommand \bdf {base de \fap\xspace}

\newcommand \Bif {Borne infé\-rieure\xspace} %
\newcommand \bif {borne infé\-rieure\xspace} %
\newcommand \bifs {bornes infé\-rieures\xspace} %

\newcommand \bsp {borne supé\-rieure\xspace} %
\newcommand \bsps {borne supé\-rieures\xspace} %


\newcommand \cac{corps \ac}  

\newcommand \calf{calcul formel\xspace}  

\newcommand \cara{carac\-té\-ris\-tique\xspace}  
\newcommand \caras{carac\-té\-ris\-tiques\xspace}  

\newcommand \carn{carac\-té\-ri\-sation\xspace}  
\newcommand \carns{carac\-té\-ri\-sations\xspace}  

\newcommand \carar{carac\-té\-riser\xspace}

\newcommand \carf{de carac\-tère fini\xspace}  

\newcommand \cdi{corps discret\xspace}
\newcommand \cdis{corps discrets\xspace}
  
\newcommand \cdv{changement de variables\xspace}  
\newcommand \cdvs{changements de variables\xspace}  

\newcommand \cli {clô\-ture inté\-grale\xspace}

\newcommand \codi {corps ordonné discret\xspace}
\newcommand \codis {corps ordonnés discrets\xspace}

\newcommand \coe {coef\-fi\-cient\xspace}
\newcommand \coes {coef\-fi\-cients\xspace}

\newcommand \coh {co\-hé\-rent\xspace}
\newcommand \cohs {co\-hé\-rents\xspace}

\newcommand \cohc {co\-hé\-rence\xspace}

\newcommand \coli {combi\-nai\-son \lin}
\newcommand \colis {combi\-nai\-sons \lins}

\newcommand \com {co\-ma\-xi\-maux\xspace}
\newcommand \come {co\-ma\-xi\-males\xspace}

\newcommand \coo {coor\-donnée\xspace}
\newcommand \coos {coor\-données\xspace}

\newcommand \cop {complé\-men\-taire\xspace}
\newcommand \cops {complé\-men\-taires\xspace}

\newcommand \cosv {conser\-vative\xspace}
\newcommand \cosvs {conser\-vatives\xspace}

\newcommand \cOsv {\hyperref[defithconserv]{conser\-vative}\xspace}
\newcommand \cOsvs {\hyperref[defithconserv]{conser\-vatives}\xspace}

\newcommand \covr {corps ordonné avec \ravs}
\newcommand \covrs {corps ordonnés avec \ravs}

\newcommand \cpb {compa\-tible\xspace} 
\newcommand \cpbs {compa\-tibles\xspace} 

\newcommand \cpbt {compa\-tibi\-lité\xspace} 
\newcommand \cpbtz {compa\-tibi\-lité} 

\newcommand \crc {corps réel clos\xspace}
\newcommand \crcs {corps réels clos\xspace}

\newcommand \crcd {corps réel clos discret\xspace}
\newcommand \crcds {corps réels clos discrets\xspace}


\newcommand \dcd {rési\-duel\-lement dis\-cret\xspace}
\newcommand \dcds {rési\-duel\-lement dis\-crets\xspace}

\newcommand \dcn {décom\-po\-sition\xspace}
\newcommand \dcns {décom\-po\-sitions\xspace}

\newcommand \dcnb {\dcn bornée\xspace}

\newcommand \dcnc {\dcn complète\xspace}

\newcommand \dcnp {\dcn partielle\xspace}

\newcommand \dcp {décom\-posa\-ble\xspace}
\newcommand \dcps {décom\-posa\-bles\xspace}

\newcommand \ddk {dimension de~Krull\xspace}
\newcommand \ddi {de dimension infé\-rieure ou égale à~}

\newcommand \ddp {domaine de Pr\"u\-fer\xspace}
\newcommand \ddps {domaines de Pr\"u\-fer\xspace}

\newcommand \Demo{Démon\-stra\-tion\xspace}     

\newcommand \demo{démon\-stra\-tion\xspace}     
\newcommand \demos{démon\-stra\-tions\xspace}     

\newcommand \dems{démons\-tra\-tions\xspace}

\newcommand \deno{déno\-mi\-nateur\xspace}     
\newcommand \denos{déno\-mi\-nateurs\xspace}   

\newcommand \deter {déter\-mi\-nant\xspace}  
\newcommand \deters {déter\-mi\-nants\xspace}  
  
\newcommand \Dfn{Défi\-nition\xspace}  
\newcommand \Dfns{Défi\-nitions\xspace}  
\newcommand \dfn{défi\-nition\xspace}  
\newcommand \dfns{défi\-nitions\xspace}  

\newcommand \dftr {droite réticulée \ftm réelle\xspace}
\newcommand \dftrs {droites réticulées \ftm réelles\xspace}
  
\newcommand \dil{diffé\-rentiel\xspace}  
\newcommand \dils{diffé\-rentiels\xspace}  
\newcommand \dile{diffé\-ren\-tielle\xspace}  
\newcommand \diles{diffé\-ren\-tielles\xspace}  

\newcommand \dip{diviseur principal\xspace}
\newcommand \dips{diviseurs principaux\xspace}

\newcommand \discri{discri\-minant\xspace}  
\newcommand \discris{discri\-minants\xspace}  

\newcommand \divle {dimension divisorielle\xspace}

\newcommand \dit{distri\-bu\-ti\-vité\xspace}

\newcommand \dlg{d'élar\-gis\-sement\xspace}  

\newcommand \dok {domaine de Dedekind\xspace}
\newcommand \doks {domaines de Dedekind\xspace}

\newcommand \dvla {à diviseurs\xspace}
\newcommand \dvlas {à diviseurs\xspace}

\newcommand \dvld {\dvlt décom\-posé\xspace} %
\newcommand \dvlds {\dvlt décom\-posés\xspace} %

\newcommand \dvlg {diviso\-riel\xspace} 
\newcommand \dvlgs {diviso\-riels\xspace} 

\newcommand \dvli {\dvlt inver\-sible\xspace} 
\newcommand \dvlis {\dvlt inver\-sibles\xspace} 

\newcommand \dvlt {diviso\-riel\-lement\xspace} %

\newcommand \dvz {di\-viseur de zéro\xspace}
\newcommand \dvzs {di\-viseurs de zéro\xspace}

\newcommand \dve {divi\-si\-bi\-lité\xspace}

\newcommand \dvee {à \dve explicite\xspace}

\newcommand \dvr {diviseur\xspace}
\newcommand \dvrs {diviseurs\xspace}


\newcommand \Eds {Exten\-sion des sca\-laires\xspace}
\newcommand \edss {exten\-sions des sca\-laires\xspace}
\newcommand \eds {exten\-sion des sca\-laires\xspace}

\newcommand \eco {\elts \com}

\newcommand \egmt {éga\-lement\xspace}

\newcommand \egt {éga\-li\-té\xspace}
\newcommand \egts {éga\-li\-tés\xspace}

\newcommand \eli{élimi\-nation\xspace}  

\newcommand \elr{élé\-men\-taire\xspace}  
\newcommand \elrs{élé\-men\-taires\xspace}  

\newcommand \elrt{élé\-men\-tai\-rement\xspace}  

\newcommand \elt{élé\-ment\xspace}  
\newcommand \elts{élé\-ments\xspace}  

\def \endo {en\-do\-mor\-phisme\xspace}
\def \endos {en\-do\-mor\-phismes\xspace}

\newcommand \entrel {rela\-tion impli\-ca\-tive\xspace}
\newcommand \entrels {rela\-tions impli\-ca\-tives\xspace}

\newcommand\evc{es\-pa\-ce vec\-to\-riel\xspace} 
\newcommand\evcs{es\-pa\-ces vec\-to\-riels\xspace} 

\newcommand \eqv {équi\-valent\xspace}  
\newcommand \eqve {équi\-va\-lente\xspace}  
\newcommand \eqvs {équi\-valents\xspace}  
\newcommand \eqves {équi\-val\-entes\xspace}  

\newcommand \eqvc {équi\-va\-lence\xspace}  
\newcommand \eqvcs {équi\-va\-lences\xspace}  

\newcommand \esid {essen\-tiel\-lement iden\-tique\xspace}  
\newcommand \esids {essen\-tiel\-lement iden\-tiques\xspace}  

\newcommand \Esid {\hyperref[defitdyesidentiques]{\esid}\xspace}  
\newcommand \Esids {\hyperref[defitdyesidentiques]{\esids}\xspace}  

\newcommand \eseq {essen\-tiel\-lement \eqve}  
\newcommand \eseqs {essen\-tiel\-lement \eqves}  

\newcommand \Eseq {\hyperref[defitheseq]{\eseq}\xspace}  
\newcommand \Eseqs {\hyperref[defitheseq]{\eseqs}\xspace}

\newcommand \fab {\fcn bornée\xspace}
\newcommand \fabs {\fcns bornées\xspace}

\newcommand \fat {\fcn totale\xspace}
\newcommand \fats {\fcn totales\xspace}

\newcommand \fap {\fcn partielle\xspace}
\newcommand \faps {\fcns partielles\xspace}

\newcommand \fip {filtre pre\-mier\xspace}
\newcommand \fips {filtres pre\-miers\xspace}

\newcommand \fipma {\fip maxi\-mal\xspace}
\newcommand \fipmas {\fips maxi\-maux\xspace}

\newcommand \fcn {factorisation\xspace}
\newcommand \fcns {factorisations\xspace}

\newcommand \fdi {for\-te\-ment dis\-cret\xspace}
\newcommand \fdis {for\-te\-ment dis\-crets\xspace}

\newcommand \fsa {fermé \sagq}
\newcommand \fsas {fermés \sagqs}

\newcommand \fsagc {fonction \sagc}
\newcommand \fsagcs {fonctions \sagcs}

\newcommand \fmt {formel\-lement\xspace}

\newcommand \frl {for\-tement réticulé\xspace}
\newcommand \frle {for\-tement réticulée\xspace}
\newcommand \frls {for\-tement réticulés\xspace}

\newcommand \ftm {fortement\xspace}

\newcommand\gmt{géométrie\xspace}  
\newcommand\gmts{géométries\xspace}  

\newcommand\gaq{\gmt \agq}  

\newcommand\gmq{géomé\-trique\xspace}  
\newcommand\gmqs{géomé\-triques\xspace}  

\newcommand\gmqt{géomé\-tri\-quement\xspace}  

\newcommand\gne{géné\-ra\-lisé\xspace}  
\newcommand\gnee{géné\-ra\-lisée\xspace}  
\newcommand\gnes{géné\-ra\-lisés\xspace}  
\newcommand\gnees{géné\-ra\-lisées\xspace}  

\newcommand\gnl{géné\-ral\xspace}  
\newcommand\gnle{géné\-rale\xspace}  
\newcommand\gnls{géné\-raux\xspace}  
\newcommand\gnles{géné\-rales\xspace}  

\newcommand\gnlt{géné\-ra\-lement\xspace}  

\newcommand\gnn{géné\-ra\-li\-sa\-tion\xspace}  
\newcommand\gnns{géné\-ra\-li\-sa\-tions\xspace}  

\newcommand\gnq {géné\-rique\xspace}  
\newcommand\gnqs {géné\-riques\xspace}  

\newcommand\gnr{géné\-ra\-liser\xspace}  

\newcommand \gns{géné\-ra\-lise\xspace}

\newcommand \gnt{géné\-ra\-lité\xspace}
\newcommand \gnts{géné\-ra\-lités\xspace}

\newcommand \grl{groupe \rtl}
\newcommand \grls{groupes \rtls}

\newcommand \gRl {\hyperref[theorieGrl]{\grl}\xspace}
\newcommand \gRls {\hyperref[theorieGrl]{\grls}\xspace}

\newcommand\gtr{géné\-ra\-teur\xspace}  
\newcommand\gtrs{géné\-ra\-teurs\xspace}  


\newcommand \homo {ho\-mo\-mor\-phisme\xspace}
\newcommand \homos {ho\-mo\-mor\-phismes\xspace}

\newcommand \hmg {homo\-gène\xspace}
\newcommand \hmgs {homo\-gènes\xspace}

\newcommand \icftr {intervalle compact réticulé \ftm réel\xspace}
\newcommand \icftrs {intervalles compacts réticulés \ftm réels\xspace}

\newcommand \icl {inté\-gra\-lement clos\xspace}
\newcommand \icle {inté\-gra\-lement close\xspace}

\newcommand \icsr {intervalle compact \stm réticulé\xspace}
\newcommand \icsrs {intervalles compacts \stm réticulés\xspace}

\newcommand \icrc {intervalle compact réel clos\xspace}
\newcommand \icrcs {intervalles compact réels clos\xspace}

\newcommand \id {idéal\xspace}
\newcommand \ids {idéaux\xspace}

\newcommand \ida {\idt \agq}
\newcommand \idas {\idts \agqs}

\newcommand \idc  {\idt de Cramer\xspace}
\newcommand \idcs {\idts de Cramer\xspace}

\newcommand \idd {idéal déter\-minan\-tiel\xspace}
\newcommand \idds {idéaux déter\-minan\-tiels\xspace}

\newcommand \idema {idéal maxi\-mal\xspace}
\newcommand \idemas {idéaux maxi\-maux\xspace}

\newcommand \idep {idéal pre\-mier\xspace}
\newcommand \ideps {idéaux pre\-miers\xspace}

\newcommand \idemi {\idep minimal\xspace}
\newcommand \idemis {\ideps minimaux\xspace}

\newcommand \idf {idéal de Fitting\xspace}
\newcommand \idfs {idéaux de Fitting\xspace}

\newcommand \idif {idéal \dvlg fini\xspace}
\newcommand \idifs {idéaux \dvlgs finis\xspace}

\newcommand \idli {idéal \dvli\xspace} 
\newcommand \idlis {idéaux \dvlis\xspace} 

\newcommand \idm {idem\-potent\xspace}
\newcommand \idms {idem\-potents\xspace}
\newcommand \idme {idem\-potente\xspace}
\newcommand \idmes {idem\-potentes\xspace}

\newcommand \idp {idéal prin\-ci\-pal\xspace}
\newcommand \idps {idé\-aux prin\-ci\-paux\xspace}

\newcommand \idt {iden\-ti\-té\xspace}
\newcommand \idts {iden\-ti\-tés\xspace}

\newcommand \idtr {indé\-ter\-minée\xspace}
\newcommand \idtrs {indé\-ter\-minées\xspace}

\newcommand \ifr {idéal frac\-tion\-naire\xspace}
\newcommand \ifrs {idéaux frac\-tion\-naires\xspace}

\newcommand \imd {immé\-diat\xspace}
\newcommand \imde {immé\-diate\xspace}
\newcommand \imds {immé\-diats\xspace}
\newcommand \imdes {immé\-diates\xspace}

\newcommand \imdt {immé\-dia\-te\-ment\xspace}

\newcommand \indtr {inf-demi-treillis\xspace} 

\newcommand \inteq {intui\-ti\-vement \eqve}
\newcommand \inteqs {intui\-ti\-vement \eqves}

\newcommand \Inteq {\hyperref[defextintequiv]{\inteq}\xspace}
\newcommand \Inteqs {\hyperref[defextintequiv]{\inteqs}\xspace}

\newcommand \ing {in\-ver\-se \gne}
\newcommand \ings {in\-ver\-ses \gnes}

\newcommand \iMP {in\-ver\-se de Moo\-re-Pen\-ro\-se\xspace}
\newcommand \iMPs {in\-ver\-ses de Moo\-re-Pen\-ro\-se\xspace}

\newcommand \ipp {\idep poten\-tiel\xspace}
\newcommand \ipps {\ideps poten\-tiels\xspace}

\newcommand \ird {irré\-duc\-tible\xspace}
\newcommand \irds {irré\-duc\-tibles\xspace}

\newcommand \iso {iso\-mor\-phisme\xspace}
\newcommand \isos {iso\-mor\-phismes\xspace}

\newcommand \itf {idéal \tf}
\newcommand \itfs {idé\-aux \tf}

\newcommand \itid {intui\-ti\-vement iden\-tique\xspace}
\newcommand \itids {intui\-ti\-vement iden\-tiques\xspace}

\newcommand \iv {inversible\xspace}
\newcommand \ivs {inversibles\xspace}

\newcommand \ivdg {inverse divisoriel\xspace} 
\newcommand \ivdgs {inverses divisoriels\xspace} 

\newcommand \ivde {inverse divisorielle\xspace} 
\newcommand \ivdes {inverses divisorielles\xspace} 

\newcommand \ivda {inverse divisoriel\xspace} 
\newcommand \ivdas {inverses divisoriels\xspace} 


\newcommand \lgb {local-global\xspace}
\newcommand \lgbe {locale-globale\xspace}
\newcommand \lgbs {local-globals\xspace}

\newcommand \lin {liné\-aire\xspace}
\newcommand \lins {liné\-aires\xspace}

\newcommand \lint {liné\-ai\-rement\xspace}

\newcommand \lmo {\lot mono\-gène\xspace}
\newcommand \lmos {\lot mono\-gènes\xspace}

\newcommand \lnl {\lot \nl}
\newcommand \lnls {\lot \nls}

\newcommand \lot {loca\-lement\xspace}

\newcommand \lon {loca\-li\-sation\xspace}
\newcommand \lons {loca\-li\-sations\xspace}

\newcommand \lop {\lot prin\-cipal\xspace}
\newcommand \lops {\lot prin\-cipaux\xspace}

\newcommand \lsdz {\lot \sdz}

\newcommand \mdi {mo\-dule des \diles}

\newcommand \mlm {mo\-dule \lmo}
\newcommand \mlms {mo\-dules \lmos}

\newcommand \mlmo {ma\-tri\-ce de loca\-li\-sation
mono\-gène\xspace}
\newcommand \mlmos {ma\-tri\-ces de loca\-li\-sation
mono\-gène\xspace}

\newcommand \mlp {ma\-tri\-ce de loca\-li\-sation
prin\-ci\-pa\-le\xspace}
\newcommand \mlps {ma\-tri\-ces de loca\-li\-sation
prin\-ci\-pa\-le\xspace}

\newcommand \mo {mo\-no\"{\i}de\xspace}
\newcommand \mos {mo\-no\"{\i}des\xspace}

\newcommand \moco {\mos \com}

\newcommand \molo {morphisme de \lon\xspace}
\newcommand \molos {morphismes de \lon\xspace}

\newcommand \mom {mo\-nô\-me\xspace}
\newcommand \moms {mo\-nô\-mes\xspace}

\newcommand \moquo {morphisme de passage au quotient\xspace}
\newcommand \moquos {morphismes de passage au quotient\xspace}

\newcommand \mpf {mo\-dule \pf}
\newcommand \mpfs {mo\-dules \pf}

\newcommand \mpl {mo\-dule plat\xspace}
\newcommand \mpls {mo\-dules plats\xspace}

\newcommand \mpn {ma\-trice de \pn}
\newcommand \mpns {ma\-trices de \pn}

\newcommand \mprn {ma\-trice de \prn}
\newcommand \mprns {ma\-trices de \prn}

\newcommand \mptf {mo\-dule \ptf}
\newcommand \mptfs {mo\-dules \ptfs}

\newcommand \mrc {mo\-dule \prc}
\newcommand \mrcs {mo\-dules \prcs}

\newcommand \mtf {mo\-du\-le \tf}
\newcommand \mtfs {mo\-du\-les \tf}


\newcommand \ncr{néces\-saire\xspace}       
\newcommand \ncrs{néces\-saires\xspace}       

\newcommand \ncrt{néces\-sai\-rement\xspace}       

\newcommand \ndz {régu\-lier\xspace}
\newcommand \ndzs {régu\-liers\xspace}

\newcommand \nl {simple\xspace}
\newcommand \nls {simples\xspace}

\newcommand \noco {\noe \coh}
\newcommand \nocos {\noes \cohs}

\newcommand \Noe {Noether\xspace}

\newcommand \noe {noethé\-rien\xspace}
\newcommand \noes {noethé\-riens\xspace}
\newcommand \noee {noethé\-rienne\xspace}
\newcommand \noees {noethé\-riennes\xspace}

\newcommand \noet {noethé\-ria\-nité\xspace}

\newcommand \nst {Null\-stellen\-satz\xspace}
\newcommand \nsts {Null\-stellen\-s\"atze\xspace}

\newcommand \op{opé\-ra\-tion\xspace}  
\newcommand \ops{opé\-ra\-tions\xspace}
\newcommand \opari{\op\ari}  
\newcommand \oparis{\ops\aris}  
\newcommand \oparisv{\ops\arisv}  

\newcommand \oqc {ouvert \qc}
\newcommand \oqcs {ouverts \qcs}

\newcommand \ort{or\-tho\-go\-nal\xspace}  
\newcommand \orte{or\-tho\-go\-na\-le\xspace}  
\newcommand \orts{or\-tho\-go\-naux\xspace}  
\newcommand \ortes{or\-tho\-go\-na\-les\xspace}  


\newcommand \pa {couple saturé\xspace}
\newcommand \pas {couples saturés\xspace}
 
\newcommand \paral{paral\-lèle\xspace}  
\newcommand \parals{paal\-lèles\xspace}  

\newcommand \paralm{paral\-lè\-lement\xspace}

\newcommand \pb{pro\-blè\-me\xspace}  
\newcommand \pbs{pro\-blè\-mes\xspace}  

\newcommand \peq {purement équa\-tion\-nelle\xspace}
\newcommand \peqs {purement équa\-tion\-nelles\xspace}

\newcommand \pf {de \pn finie\xspace}

\newcommand \plc {rési\-duel\-lement \zed}
\newcommand \plcs {rési\-duel\-lement \zeds}

\newcommand \Plg {Prin\-cipe \lgb}
\newcommand \plg {prin\-cipe \lgb}
\newcommand \plgs {prin\-cipes \lgbs}

\newcommand \plga {\plg abs\-trait\xspace}
\newcommand \plgas {\plgs abs\-traits\xspace}

\newcommand \Plgc {\Plg con\-cret\xspace}
\newcommand \plgc {\plg con\-cret\xspace}
\newcommand \plgcs {\plgs con\-crets\xspace}

\newcommand \pn {présen\-ta\-tion\xspace}
\newcommand \pns {présen\-ta\-tions\xspace}

\newcommand \pog {\pol \hmg\xspace}
\newcommand \pogs {\pols \hmgs\xspace}

\newcommand \Pol {Poly\-nôme\xspace}
\newcommand \Pols {Poly\-nômes\xspace}

\newcommand \pol {poly\-nôme\xspace}
\newcommand \pols {poly\-nômes\xspace}

\newcommand \poll{poly\-nomial\xspace}  
\newcommand \polls{poly\-nomiaux\xspace}  
\newcommand \polle{poly\-no\-miale\xspace}  
\newcommand \polles{poly\-no\-miales\xspace}  

\newcommand \pollt{poly\-no\-mia\-lement\xspace}  

\newcommand \polfon {\pol fon\-da\-men\-tal\xspace}
\newcommand \polmu {\pol rang\xspace}
\newcommand \polmus {\pols rang\xspace}
\newcommand \polcar {\pol carac\-té\-ris\-tique\xspace}
\newcommand \polmin {\pol mini\-mal\xspace}

\newcommand \polg {\pol g\'en\'e\-rateur\xspace}
\newcommand \polgs {\pols g\'en\'e\-rateurs\xspace}
\newcommand \polgmin {\polg minimal\xspace}

\newcommand \prc {\pro de rang constant\xspace}
\newcommand \prcs {\pros de rang constant\xspace}

\newcommand \prcc {prin\-ci\-pe de \rcc}
\newcommand \prca {prin\-ci\-pe de \rca}
\newcommand \prce {prin\-ci\-pe de \rce}

\newcommand \prmt {préci\-sé\-ment\xspace}
\newcommand \Prmt {Préci\-sé\-ment\xspace}

\newcommand \prn {pro\-jec\-tion\xspace}
\newcommand \prns {pro\-jec\-tions\xspace}

\newcommand \pro {pro\-jec\-tif\xspace}
\newcommand \pros {pro\-jec\-tifs\xspace}

\newcommand \prr {pro\-jec\-teur\xspace}
\newcommand \prrs {pro\-jec\-teurs\xspace}

\newcommand \Prt {Pro\-pri\-été\xspace}
\newcommand \Prts {Pro\-pri\-étés\xspace}
\newcommand \prt {pro\-pri\-été\xspace}
\newcommand \prts {pro\-pri\-étés\xspace}

\newcommand \ptf {\pro \tf}
\newcommand \ptfs {\pros \tf}

\newcommand \qc {quasi-compact\xspace}
\newcommand \qcs {quasi-compacts\xspace}

\newcommand \qi {qua\-si in\-tè\-gre\xspace}
\newcommand \qis {qua\-si in\-tè\-gres\xspace}

\newcommand \qnl {quasi-\nl}
\newcommand \qnls {quasi-\nls}

\newcommand \ralg {règle \agq}
\newcommand \ralgs {règles \agqs}

\newcommand \rav {racine virtuelle\xspace}
\newcommand \ravs {racines virtuelles\xspace}

\newcommand \rcc {\rcm con\-cret\xspace}
\newcommand \rca {\rcm abs\-trait\xspace}
\newcommand \rce {\rcc des é\-ga\-li\-tés\xspace}

\newcommand \rcm {recol\-lement\xspace}
\newcommand \rcms {recol\-lements\xspace}

\newcommand \rcv {recou\-vrement\xspace} 
\newcommand \rcvs {recou\-vrements\xspace}

\newcommand \rde {rela\-tion de dépen\-dance\xspace}
\newcommand \rdes {rela\-tions de dépen\-dance\xspace}

\newcommand \rdi {\rde inté\-grale\xspace}
\newcommand \rdis {\rdes inté\-grales\xspace}

\newcommand \rdl {\rde \lin}
\newcommand \rdls {\rdes \lins}

\newcommand \rdt {rési\-duel\-lement\xspace}

\newcommand \rdy {règle dyna\-mique\xspace}
\newcommand \rdys {règles dyna\-miques\xspace}

\newcommand \red {règle directe\xspace}
\newcommand \reds {règles directes\xspace}

\newcommand \rex {règle exis\-ten\-tielle simple\xspace}
\newcommand \rexs {règles exis\-ten\-tielles simples\xspace}

\newcommand \reX {\hyperref[defexistsimple]{règle exis\-ten\-tielle simple}\xspace}
\newcommand \reXs {\hyperref[defexistsimple]{règles exis\-ten\-tielles simples}\xspace}

\newcommand \rexri {règle exis\-ten\-tielle rigide\xspace}
\newcommand \rexris {règles exis\-ten\-tielles rigides\xspace}

\newcommand \rsim {règle de simplification\xspace}
\newcommand \rsims {règles de simplification\xspace}

\newcommand \rtl {réti\-culé\xspace}
\newcommand \rtls {réti\-culés\xspace}

\newcommand \rmq {\rcm de quotients\xspace} 
\newcommand \rvq {\rcv par quotients\xspace} 
\newcommand \rmqs {\rcms de quotients\xspace} %
\newcommand \rvqs {\rcvs par quotients\xspace} %

\newcommand \rpf {réduite-de-présen\-tation-finie\xspace}
\newcommand \rpfs {réduites-de-présen\-tation-finie\xspace}

\newcommand \rrl {relation de \recu \lin}
\newcommand \rrls {relations de \recu \lin}


\newcommand \sad {\salg dynamique\xspace}
\newcommand \sads {\salgs dynamiques\xspace}

\newcommand \sagq {semi\agq}
\newcommand \sagqs {semi\agqs}

\newcommand \sagc {\sagq continue\xspace}
\newcommand \sagcs {\sagqs continues\xspace}

\newcommand \salg {structure \agq}
\newcommand \salgs {structures \agqs}

\newcommand \scentrel {relation semi-implicative\xspace}
\newcommand \scentrels {relations semi-implicatives\xspace}

\newcommand \scf {schéma fini\-taire\xspace}
\newcommand \scfs {schémas fini\-taires\xspace}

\newcommand \scl {schéma \elr}
\newcommand \scls {schémas \elrs}

\newcommand \sdo {\sdr \orte}
\newcommand \sdos {\sdrs \ortes}

\newcommand \sdr {somme directe\xspace}
\newcommand \sdrs {sommes directes\xspace}

\newcommand \sdz {sans \dvz}

\newcommand \sfio {sys\-tème fondamental d'\idms ortho\-gonaux\xspace}
\newcommand \sfios {sys\-tèmes fondamentaux d'\idms ortho\-gonaux\xspace}

\newcommand \sgr {\sys \gtr}
\newcommand \sgrs {\syss \gtrs}

\newcommand \slgb {stricte\-ment \lgb}
\newcommand \slgbs {stricte\-ment \lgbs}

\newcommand \sli {\sys \lin}
\newcommand \slis {\syss \lins}

\newcommand \smq {symé\-trique\xspace}
\newcommand \smqs {symé\-triques\xspace}

\newcommand \spb {sépa\-rable\xspace}  
\newcommand \spbs {sépa\-rables\xspace}

\newcommand \spe {spéci\-fi\-cation\xspace}
\newcommand \spes {spéci\-fi\-cations\xspace}

\newcommand \spi {\spe incomplète\xspace}
\newcommand \spis {\spes incomplètes\xspace}

\newcommand \spl {sépa\-rable\xspace}  
\newcommand \spls {sépa\-rables\xspace}

\newcommand \spo {semipolynôme\xspace}
\newcommand \spos {semipolynômes\xspace}

\newcommand \spt{sépa\-ra\-bi\-lité\xspace}

\newcommand \srg {suite régu\-lière\xspace}
\newcommand \srgs {suites régu\-lières\xspace}

\newcommand \srl{suite r\'ecur\-rente \lin}
\newcommand \srls{suites r\'ecur\-rentes \lins}
\newcommand \Srls{Suites r\'ecur\-rentes \lins}

\newcommand \stf {strictement fini\xspace}
\newcommand \stfs {strictement finis\xspace}
\newcommand \stfe {strictement finie\xspace}
\newcommand \stfes {strictement finies\xspace}

\newcommand \stl {stablement libre\xspace}
\newcommand \stls {stablement libres\xspace}

\newcommand \stm {strictement\xspace}

\newcommand \str {\stm réticulé\xspace}
\newcommand \stre {\stm réticulée\xspace}
\newcommand \strs {\stm réticulés\xspace}
\newcommand \stres {\stm réticulées\xspace}

\newcommand \sul {supplé\-men\-taire\xspace}
\newcommand \suls {supplé\-men\-taires\xspace}

\newcommand \Sut {Support\xspace}
\newcommand \Suts {Supports\xspace}
\newcommand \sut {support\xspace}

\newcommand \syc {\sys de coordon\-nées\xspace}
\newcommand \sycs {\syss de coordon\-nées\xspace}

\newcommand \syp {\sys \poll}
\newcommand \syps {\syss \polls}

\newcommand \sys {sys\-tème\xspace}
\newcommand \syss {sys\-tèmes\xspace}

\newcommand \talg {théorie \agq}
\newcommand \talgs {théories \agqs}

\newcommand \tco {théorie cohé\-rente\xspace}
\newcommand \tcos {théories cohé\-rentes\xspace}

\newcommand \tdy {théorie dyna\-mique\xspace}
\newcommand \tdys {théories dyna\-miques\xspace}

\newcommand \tel {théorie exis\-ten\-tielle\xspace}
\newcommand \tels {théories exis\-ten\-tielles\xspace}

\newcommand \telri {théorie exis\-ten\-tielle rigide\xspace}
\newcommand \telris {théories exis\-ten\-tielles rigides\xspace}

\newcommand \tf {de type fini\xspace}

\newcommand \tfo {théorie formelle\xspace}
\newcommand \tfos {théorie formelles\xspace}

\newcommand \tgm {théorie \gmq}
\newcommand \tgms {théories \gmqs}

\newcommand \Tho {Théo\-rème\xspace}
\newcommand \Thos {Théo\-rèmes\xspace}
\newcommand \tho {théo\-rème\xspace}
\newcommand \thos {théo\-rèmes\xspace}

\newcommand \thoc {théo\-rème$\mathbf{^*}$~}

\newcommand \tpe {théorie \peq}
\newcommand \tpes {théories \peqs}

\newcommand \trdi {treil\-lis dis\-tri\-bu\-tif\xspace}
\newcommand \trdis {treil\-lis dis\-tri\-bu\-tifs\xspace}

\newcommand \trel {trans\-for\-mation \elr}
\newcommand \trels {trans\-for\-mations \elrs}

\newcommand \umd {unimo\-du\-laire\xspace}
\newcommand \umds {unimo\-du\-laires\xspace}

\newcommand \unt {uni\-taire\xspace}
\newcommand \unts {uni\-taires\xspace}

\newcommand \uvl {uni\-versel\xspace}
\newcommand \uvle {uni\-ver\-selle\xspace}
\newcommand \uvls {uni\-versels\xspace}
\newcommand \uvles {uni\-ver\-selles\xspace}


\newcommand \vfn {véri\-fi\-cation\xspace}
\newcommand \vfns {véri\-fi\-cations\xspace}

\newcommand \vmd {vec\-teur \umd}
\newcommand \vmds {vec\-teurs \umds}

\newcommand \zed {z\'{e}ro-di\-men\-sionnel\xspace}
\newcommand \zede {z\'{e}ro-di\-men\-sion\-nelle\xspace}
\newcommand \zeds {z\'{e}ro-di\-men\-sion\-nels\xspace}
\newcommand \zedes {z\'{e}ro-di\-men\-sion\-nelles\xspace}

\newcommand \zedr {\zed réduit\xspace}
\newcommand \zedrs {\zeds réduits\xspace}

\newcommand \zmt {\tho de Zariski-Grothen\-dieck\xspace}


\newcommand \cof {cons\-truc\-tif\xspace}
\newcommand \cofs {cons\-truc\-tifs\xspace}

\newcommand \cov {cons\-truc\-tive\xspace}
\newcommand \covs {cons\-truc\-tives\xspace}

\newcommand \coma {\maths\covs}
\newcommand \clama {\maths clas\-siques\xspace}

\renewcommand \cot {cons\-truc\-ti\-vement\xspace}

\newcommand \matn {mathé\-ma\-ticien\xspace}
\newcommand \matne {mathé\-ma\-ti\-cienne\xspace}
\newcommand \matns {mathé\-ma\-ticiens\xspace}
\newcommand \matnes {mathé\-ma\-ti\-ciennes\xspace}

\newcommand \maths {mathé\-ma\-tiques\xspace}
\newcommand \mathe {mathé\-ma\-tique\xspace}

\newcommand \prco {démons\-tration \cov}
\newcommand \prcos {démons\-trations \covs}



\theoremstyle{plain}
\newtheorem{ftheorem}{Théorème}[section]
\newtheorem{fthdef}[ftheorem]{Théorème et définition}
\newtheorem{fpstf}[ftheorem]{Positivstellensatz formel}
\newtheorem{fpst}[ftheorem]{Positivstellensatz}
\newtheorem{flemma}[ftheorem]{Lemme}
\newtheorem{fcorollary}[ftheorem]{Corolaire}
\newtheorem{fconjecture}[ftheorem]{Conjecture}
\newtheorem{fproposition}[ftheorem]{Proposition}
\newtheorem{fprpta}[ftheorem]{Propriétés attendues}
\newtheorem{fpropdef}[ftheorem]{Proposition et définition}
\newtheorem{ffact}[ftheorem]{Fait}
\newtheorem{fconvention}[ftheorem]{Convention}
\newtheorem{falgorithm}{Algorithme}

\newtheorem{ftheoremc}[ftheorem]{Th\'{e}or\`{e}me\etoz}
\newtheorem{flemmac}[ftheorem]{Lemme\etoz}
\newtheorem{fcorollaryc}[ftheorem]{Corolaire\etoz}
\newtheorem{fproprietec}[ftheorem]{Propri\'{e}t\'{e}\etoz}
\newtheorem{fpropositionc}[ftheorem]{Proposition\etoz}
\newtheorem{ffactc}[ftheorem]{Fait\etoz}

\newtheorem{fatheorem}{Théorème}[section]
\newtheorem{falemma}[fatheorem]{Lemme}
\newtheorem{facorollary}[fatheorem]{Corolaire}
\newtheorem{fapropriete}[fatheorem]{Propriété}
\newtheorem{faproposition}[fatheorem]{Proposition}
\newtheorem{fapropdef}[fatheorem]{Proposition et définition}
\newtheorem{fafact}[fatheorem]{Fait}

\theoremstyle{definition}
\newtheorem{frstr}[ftheorem]{Règles structurelles}
\newtheorem{frstra}[ftheorem]{Règles structurelles admissibles}
\newtheorem{fdefinition}[ftheorem]{Définition}
\newtheorem{fdfni}[ftheorem]{Définition informelle}
\newtheorem{fdefinitions}[ftheorem]{Définitions}
\newtheorem{fexample}[ftheorem]{Exemple}
\newtheorem{fexamples}[ftheorem]{Exemples}
\newtheorem{fnotation}[ftheorem]{Notation}
\newtheorem{fproblem}[ftheorem]{Problème}
\newtheorem{fquestion}[ftheorem]{Question}
\newtheorem{fquestions}[ftheorem]{Questions}

\newtheorem{fdefinitionc}[ftheorem]{Définition\etoz}
\newtheorem{fdefinota}[ftheorem]{Définition et notation} 
\newtheorem{faquestion}[fatheorem]{Question}
\newtheorem{fadefinition}[fatheorem]{Définition}

\theoremstyle{remark}
\newtheorem{fremark}[ftheorem]{Remarque}
\newtheorem{fremarks}[ftheorem]{Remarques}
\newtheorem{faremark}[fatheorem]{Remarque}
\newtheorem{faremarks}[fatheorem]{Remarques}
\newtheorem{fcomment}[ftheorem]{Commentaire}

\rdb
\label{beginfrench}

\FrenchFootnotes

\title{ Une variante de l'\algo de Berlekamp-Massey}

\author{
Nadia Ben Atti
(\thanks {~
Équipe de Mathématiques, CNRS UMR 6623, UFR des Sciences et
Techniques,
Université de Franche-Comté, 25 030 BESANCON cedex, FRANCE.
{\tt nadia.benatti@ensi.rnu.tn}}~),
Gema M. Diaz--Toca
(\thanks {~
Dpto. de Matematicas Aplicada.
Universidad de Murcia, Spain.
{\tt gemadiaz@um.es}}~)
Henri Lombardi
(\thanks{~
Équipe de Mathématiques, CNRS UMR 6623, UFR des Sciences et
Techniques,
Université de Franche-Comté, 25 030 BESANCON cedex, FRANCE,
{\tt henri.lombardi@univ-fcomte.fr},
partiellement financé par le  réseau européen RAAG
CT-2001-00271}~)
}

\date{2005, traduction 2022}

\emptythanks\setcounter{footnote}{0}
\setcounter{equation}{0}

\maketitle

\begin{abstract}
Nous donnons une variante, légèrement plus simple, de l'\algo de
Berlekamp-Massey. L'explication
de l'\algo est également plus facile. En outre cela autorise une 
approche \gui{dynamique} de cet \algo utile dans certaines situations
\end{abstract}
\bni MSC 2000: 68W30, 15A03

\mni Mots clés:  Algorithme de Berlekamp-Massey. \Srls.

\bni {\bf Note:} Cet article est la version française de l'article
\textsl{The Berlekamp-Massey Algorithm revisited} paru dans 
AAECC {\bf 17} 1 (2006), 75--82.
Received: 14 September 2005 / Published online: 9 March 2006

\setcounter{tocdepth}{4}

\startcontents[french]
\printcontents[french]{}{1}{}

\section{Introduction: l'\algo de Berlekamp-Massey usuel }

Soit $\mathbb{K}$ un corps arbitraire. Étant donné une \srl  notée $\mathcal{S}(x)=\sum_{i=0}^\infty a_i
x^i$, $a_i \in \mathbb{K}$, on désire calculer son \polmin, noté $P(x)$. Rappelons que si 
$P(x)=\sum_{i=0}^d p_i x^i$, alors
$P(x)$ est le \pol de degré le plus petit possible satisfaisant $
\sum_{i=0}^d p_i a_{j+i}=0, $ pour tous $j\in\N$.

Supposons que ce \polmin a son degré majoré par \(n\). Alors l'\algo de Berlekamp-Massey
utilise seulement les $2n$ premiers \coes de
$\mathcal{S}(x)$ par calculer ce \polmin. Ces \(2n\) \coes définissent le \pol $S=\sum_{i=0}^{2n-1}a_i\,x^i$.

Une vaste littérature est parue concernant cet \algo.
  L'\algo original a été créé en 1968 par Berlekamp (voir \cite{fBe}) pour décoder les codes de Bose-Chaudhuri-Hocquenghem (BCH). Une année plus tard, l'\algo a été simplifié dans \cite{fbr-ms}. La ressemblance entre cet \algo et l'\algo d'Euclide étendu 
a été soulignée dans plusieurs articles, par exemple \cite{fcheng}, \cite{fdorns}, \cite{fmilles}, \cite{fsugi} and
\cite{fwelch}. Des interprtétations plus récentes en termes de matrices de Hankel  et d'approximants de Padé
se trouvent dans \cite{fjon} et \cite{fpan}.

L'interprétation usuelle de l'\algo de Berlekamp-Massey est exprimée dans le pseudo code de l'Algorithme \ref{fBMA}.

\begin{falgor}[Algorithme de Berlekamp-Massey usuel]\label{fBMA}
\Entree Un entier $n\geq 1$. Une liste non nulle d'éléments du
corps $\K$, $[a_0,a_1,\ldots,a_{2n-1}]$: les $2n$ premiers termes
d'une \srl, sous l'hypothèse qu'elle admet un \polg de degré
$\leq n$. 
\Sortie Le \polgmin $P$ de la \srl. 
\Debut
\Varloc $R,R_0,R_1,V,V_0,V_1,Q$ : \pols en $X$  
\hst \# initialisation 
\hsu   $R_0:=X^{2n}$ ;
$R_1:=\sum_{i=0}^{2n-1}a_iX^i$ ;  $V_0=0$  ;
$V_1=1$ ; 
\hst \# boucle 
\hsu    \tantque{n \leq  \deg(R_1)} 
\hsd
$(Q,R) :=$ quotient et reste de la division de $R_0$ par $R_1$ ;
\hsd         $V := V_0-Q*V_1$ ; 
\hsd 
$V_0:=V_1$ ; $V_1:=V$ ; $R_0:=R_1$ ; $R_1:=R$ ; 
\hsu
\fintantque \hst \# sortie \hsu $d:=\sup(\deg(V_1),1+\deg(R_1)) $
; $P:=X^dV_1(1/X)$ ; Retourner $P=P/\cd(P)$. \fin
\end{falgor}

Bien que très simple cet \algo a toujours semblé un peu trop 
difficile
à justifier. Dans la littérature, il semble y avoir une petite confusion concernant la \dfn du \polmin d'une \srl.
Ici, nous introduisons dans la section 2 une légère modification de l'\algo qui le rend plus naturel et plus facile à justifier. Nous n'avons pas trouvé trace de cette variante dans la littérature parue avant notre première soumission à Mathematics of Computation en 2004. Cette variante est apparue dans le livre de Victor Shoup  (\cite{fshoup}), 
sans aucune référence à un article antérieur.

\section{De bonnes raisons pour modifier l'\algo usuel}

La variante que nous introduisons est basée sur l'idée suivante.
Puisqu'à la fin de l'\algo, le \pol $V$ doit \^{e}tre renversé 
selon
une prodédure difficile à éxpliquer (pourquoi doit-on prendre 
le \pol
réciproque en degré $d=\sup(\deg(V_1),1+\deg(R_1))$?), le mieux ne
serait-il pas de traiter directement la \srl \gui{dans le bon sens} 
(elle
a elle-m\^{e}me été renversée au départ lorsqu'on a affecté
$R_1:=\sum_{i=0}^{2n-1}a_iX^i$)?

Naturellement l'appréciation selon laquelle la \srl a été 
renversée
au départ peut sembler subjective. Elle est en fait renforcée par 
la
remarque suivante. Si le \polgmin est de degré $d$ nettement plus 
petit
que $n$ les calculs dans l'\algo ne devraient pas \^{e}tre 
sensiblement
différents lorsqu'on travaille avec les $2d$ premiers termes de la 
suite
ou   lorsqu'on travaille avec les $2n$ premiers termes. Or le 
renversement
effectué au début de l'\algo change complètement le calcul qui 
est
fait. Tandis qu'en l'absence de renversement, avec notre variante, le 
calcul
sur la suite courte peut facilement \^{e}tre regardé comme le 
calcul sur
la suite longue, tronqué de manière convenable.

Une autre confirmation est le caractère plus simple et plus facile 
à
justifier dans l'affectation finale.

\begin{falgor}[Algorithme de Berlekamp-Massey, variante]\label{fBMA2}
\Entree Un entier $n\geq 1$. Une liste non nulle d'éléments du
corps $\K$, $[a_0,a_1,\ldots,a_{2n-1}]$: les $2n$ premiers termes
d'une \srl, sous l'hypothèse qu'elle admet un \polg de degré
$\leq n$. 
\Sortie Le \polgmin $P$ de la \srl. 
\Debut 
\Varloc $R,R_0,R_1,V,V_0,V_1,Q$ : \pols en $X$ ; $m=2n-1$ : 
entier.
\hst \# initialisation 
\hsu $m:=2n-1$ ;
$R_0:=X^{2n}$ ; $R_1:=\sum_{i=0}^{m}a_{m-i}X^i$ ; 
$V_0=0$ ;  $V_1=1$ ; 
\hst \# boucle 
\hsu    \tantque{n
\leq  \deg(R_1)} 
\hsd        $(Q,R) :=$ quotient et reste de la
division de $R_0$ par $R_1$ ; 
\hsd    $V :=V_0-Q*V_1$ ; 
\hsd    $V_0:=V_1$ ;
$V_1:=V$ ; $R_0:=R_1$ ; $R_1:=R$ ; 
\hsu \fintantque \hst \# sortie
\hsu Retourner $P:=V_1/\cd(V_1)$. \fin
\end{falgor}

Nous allons maintenant prouver la correction de cet \algo.

Si \,$\s{a}=(a_n)_{n\in \N}$\, est une suite arbitraire
et si \,$i,$ $r,$ $p\in\N$\,  nous noterons
\,$\rH^{\s{a}}_{i,r,p}$\, la matrice de Hankel suivante,
qui possède \,$r$\, lignes et \,$p$\, colonnes:
$$\rH^{\s{a}}_{i,r,p}=
\left[ \begin{array}{ccccc}
a_i & a_{i+1} & a_{i+2}   &\ldots & a_{i+p-1}  \\
a_{i+1} & a_{i+2} &       &       & a_{i+p}  \\
a_{i+2} &     &       &       &     \\
\vdots &  &       &       & \vdots   \\
a_{i+r-1}&a_{i+r}&\ldots&\ldots & a_{i+r+p-2}
\end{array}\right]\,.
$$
et nous noterons $\rP^{\s{a}}(X)$ le \polgmin de $\s a$.

La proposition suivante est classique (voir par exemple 
\cite{fjon}).
\begin{fproposition} \label{fpropSRL}
Si \,$\s{a}$\, est une \srl qui admet un
\pol générateur de degré $\leq n$,
 alors le degré \,$d\leq n$\, de son \pol générateur minimal
\,$\rP^{\s{a}}$\, est égal au rang de la matrice de Hankel
$$\rH^{\s{a}}_{0,n,n}=
\left[ \begin{array}{cccccc}
a_0 & a_{1} & a_{2}   &\cdots & a_{n-1}  \\
a_{1} & a_{2} &       & \adots  & a_{n}  \\
a_{2} &     & \adots & \adots & \vdots    \\
\vdots &\adots   & \adots  &       & \vdots   \\
a_{n-1}&a_{n}&\cdots&\cdots & a_{2n-2}
\end{array}\right] \;
.$$
Les \coes de
\,$\rP^{\s{a}}(X)=X^d-\sum_{i=0}^{d-1}g_iX^i\in\K[X]\,$
sont l'unique solution de l'équation
$$\rH^{\s{a}}_{0,d,d}\,G=\rH^{\s{a}}_{1,d,d}
$$
\cad encore l'unique solution du \sli
\begin{equation} \label{fEqSRL3}
\left[ \begin{array}{ccccc}
a_0 & a_{1} & a_{2}   &\cdots & a_{d-1}  \\
a_{1} & a_{2} &       & \adots  & a_{d}  \\
a_{2} &     & \adots & \adots & \vdots    \\
\vdots &\adots   & \adots  &       & \vdots   \\
a_{d-1}&a_{d}&\cdots&\cdots & a_{2d-2}
\end{array}\right] \;
\left[ \begin{array}{c}
g_0 \\ g_1 \\ g_2 \\  \vdots   \\  g_{d-1}
\end{array}\right]
=
\left[ \begin{array}{c}
a_d \\  a_{d+1} \\ a_{d+2} \\  \vdots \\ a_{2d-1}
\end{array}\right] \,.
\end{equation}
\end{fproposition}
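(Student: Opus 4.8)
Voici le plan. Le point de départ est le fait classique que les \polgs de $\s{a}$, augmentés du \pol nul, forment un idéal de $\K[X]$ : il est stable par addition, et si $Q=\sum_i q_iX^i$ en est un élément, alors $XQ$ aussi, car $XQ$ a pour \coes les $q_{i-1}$ et $\sum_i q_{i-1}a_{j+i}=\sum_i q_i a_{(j+1)+i}=0$ pour tout $j$. Cet idéal est donc engendré par le \pol unitaire $\rP^{\s{a}}$, et tout \polg en est multiple ; en particulier c'est le cas du \polg de degré $\leq n$ fourni par l'hypothèse, d'où $d\leq n$, et aucun \pol non nul de degré $<d$ n'est générateur de $\s{a}$. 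L'ingrédient clé sera le lemme d'amorçage suivant : \emph{pour $(\lambda_0,\ldots,\lambda_{d-1})\in\K^d$, si $\sum_{k=0}^{d-1}\lambda_k a_{k+r}=0$ pour $r=0,1,\ldots,d-1$, alors cette \egt a lieu pour tout $r\in\N$.}

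Pour établir ce lemme, posons $b_r=\sum_{k=0}^{d-1}\lambda_k a_{k+r}$. En écrivant $\rP^{\s{a}}(X)=X^d-\sum_{i=0}^{d-1}g_iX^i$, la relation $a_{j+d}=\sum_{i=0}^{d-1}g_i a_{j+i}$, valable pour tout $j\in\N$, appliquée à $j=k+r$, multipliée par $\lambda_k$ puis sommée sur $k$, donne $b_{r+d}=\sum_{i=0}^{d-1}g_i b_{r+i}$ pour tout $r$ : la suite $(b_r)$ satisfait la \rrl d'ordre $d$ attachée à $\rP^{\s{a}}$. Comme $b_0=\cdots=b_{d-1}=0$, une \recu \imde donne $b_r=0$ pour tout $r$, ce qui prouve le lemme. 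On en tire l'inversibilité de $\rH^{\s{a}}_{0,d,d}$ : si une combinaison $\sum_{c=0}^{d-1}\lambda_c C'_c$ de ses colonnes $C'_c=(a_{c+r})_{0\leq r\leq d-1}$ est nulle, alors $\sum_c\lambda_c a_{c+r}=0$ pour $0\leq r\leq d-1$, donc pour tout $r$ d'après le lemme, de sorte que $\sum_c\lambda_c X^c$ serait un \polg de degré $<d$, donc nul, et les $\lambda_c$ sont tous nuls.

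On en déduit la première partie. D'une part, le bloc $d\times d$ supérieur gauche de $\rH^{\s{a}}_{0,n,n}$ est $\rH^{\s{a}}_{0,d,d}$, qui est inversible, donc $\mathrm{rang}\,\rH^{\s{a}}_{0,n,n}\geq d$. D'autre part, notons $C_0,\ldots,C_{n-1}$ les colonnes de $\rH^{\s{a}}_{0,n,n}$, avec $C_c=(a_{c+r})_{0\leq r\leq n-1}$. Pour $c\geq d$, la \rrl appliquée à $j=c-d+r$ (licite puisque $c-d+r\geq 0$) donne $a_{c+r}=\sum_{i=0}^{d-1}g_i a_{(c-d+i)+r}$, \cad $C_c=\sum_{i=0}^{d-1}g_i C_{c-d+i}\in\mathrm{Vect}(C_{c-d},\ldots,C_{c-1})$ ; une \recu sur $c$ montre alors que toutes les colonnes appartiennent à $\mathrm{Vect}(C_0,\ldots,C_{d-1})$, donc $\mathrm{rang}\,\rH^{\s{a}}_{0,n,n}\leq d$. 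Ce rang vaut donc exactement $d$.

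Il reste la seconde partie. Pour $0\leq r\leq d-1$, la ligne d'indice $r$ du \sli (\ref{fEqSRL3}) s'écrit $\sum_{i=0}^{d-1}a_{r+i}g_i=a_{r+d}$, qui n'est autre que la \rrl de $\rP^{\s{a}}$ prise en $j=r$ ; le vecteur $G=(g_0,\ldots,g_{d-1})$ en est donc solution, et cette solution est unique puisque la matrice $\rH^{\s{a}}_{0,d,d}$ de ce \sli est inversible, comme on l'a vu plus haut. Le seul point véritablement non formel est le lemme d'amorçage : c'est là qu'interviennent la minimalité du degré de $\rP^{\s{a}}$ et le fait que les suites de relations $(b_r)$ satisfont elles-m\^{e}mes la \rrl d'ordre $d$ ; tout le reste n'est que manipulation des colonnes des matrices de Hankel.
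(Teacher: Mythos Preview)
Your proof is correct and self-contained. However, there is no proof in the paper to compare it against: both the English and the French versions state this proposition as a classical fact without proof (the French version says ``La proposition suivante est classique (voir par exemple \cite{fjon})'', and the English version introduces it as ``the well known relation between the rank of Hankel matrix and the sequence'').

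Since you supply an argument where the paper gives none, let me simply confirm the key steps. The ``lemme d'amor\c{c}age'' is the heart of the matter and is correctly proved: the auxiliary sequence $b_r=\sum_k\lambda_k a_{k+r}$ inherits the order-$d$ recurrence from $\s a$, hence $d$ initial zeros force it to vanish identically. From this you correctly deduce the invertibility of $\rH^{\s a}_{0,d,d}$ (a nontrivial column relation would produce a generating polynomial of degree $<d$), which gives both the lower bound on the rank and the uniqueness in~(\ref{fEqSRL3}). The upper bound on the rank via $C_c=\sum_i g_i C_{c-d+i}$ for $c\ge d$ is the standard column-reduction and your index check $c-d+r\ge 0$ is the right justification. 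Nothing is missing.
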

\begin{fcorollary} \label{corfpropSRL}
On reprend les notations de la proposition \ref{fpropSRL}. Pour qu'un vecteur $Y=(p_0,\ldots,p_n)$ soit solution de 
l'équation
$$
\rH^{\s{a}}_{0,n,n+1}\,Y=0
$$
\cad
\begin{equation} \label{fEqSRL5}
\left[ \begin{array}{cccccc}
a_0 & a_{1} & a_{2}   &\cdots & a_{n-1}&a_{n}  \\
a_{1} & a_{2} &       & \adots  & a_{n}& a_{n+1}  \\
a_{2} &     & \adots & \adots & \vdots& \vdots    \\
\vdots &\adots   & \adots  &       & \vdots& \vdots   \\
a_{n-1}&a_{n}&\cdots&\cdots & a_{2n-2}
& a_{2n-1}
\end{array}\right] \;
\left[ \begin{array}{c}
p_0 \\ p_1 \\ p_2 \\  \vdots   \\  p_{n-1}\\  p_{n}
\end{array}\right]
= 0
\end{equation}
il faut et il suffit que le \pol $P(X)=\sum_{i=0}^{n}p_iX^i\in\K[X]$ 
soit multiple de
$\rP^{\s{a}}(X)$.

\end{fcorollary}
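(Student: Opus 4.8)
The plan is to show that, viewed as the space of coefficient vectors of polynomials of degree $\le n$, the kernel of $\rH^{\s{a}}_{0,n,n+1}$ is exactly the set of multiples of $\rP^{\s{a}}(X)$ of degree $\le n$. Since that set is the $\K$-linear span of $\rP^{\s{a}},X\rP^{\s{a}},\ldots,X^{n-d}\rP^{\s{a}}$ (with $d=\deg\rP^{\s{a}}$ as in Proposition~\ref{fpropSRL}), it will be enough to exhibit these $n+1-d$ vectors inside the kernel, note that they are linearly independent, and prove that $\Ker\rH^{\s{a}}_{0,n,n+1}$ has dimension exactly $n+1-d$.

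First I would read off the equation row by row: the $k$-th row of $\rH^{\s{a}}_{0,n,n+1}$ is $(a_k,a_{k+1},\ldots,a_{k+n})$, so $\rH^{\s{a}}_{0,n,n+1}\,Y=0$ is precisely the system $\sum_{i=0}^n p_i\,a_{k+i}=0$ for $0\le k\le n-1$, i.e.\ the linear recurrence attached to $P(X)=\sum_{i=0}^n p_iX^i$ holds on the $n$ shifts available inside $a_0,\ldots,a_{2n-1}$. For the easy inclusion I would recall the standard fact that every polynomial multiple of a generating polynomial of $\s{a}$ is again a generating polynomial of $\s{a}$ (if $P=Q\,\rP^{\s{a}}$, then by convolution $\sum_i p_i a_{m+i}$ is a $\K$-combination of the vanishing sums attached to $\rP^{\s{a}}$, hence $=0$ for all $m\ge 0$). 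Consequently, if $\deg P\le n$ and $\rP^{\s{a}}\mid P$, then $Y\in\Ker\rH^{\s{a}}_{0,n,n+1}$; taking $P=X^k\rP^{\s{a}}$ for $0\le k\le n-d$ produces $n+1-d$ elements of the kernel, namely the staircase vectors $[-g_0,\ldots,-g_{d-1},1,0,\ldots,0]^t,\ldots,[0,\ldots,0,-g_0,\ldots,-g_{d-1},1]^t$, which are linearly independent because $X^k\rP^{\s{a}}$ has degree exactly $d+k$.

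It then remains to check $\dim_\K\Ker\rH^{\s{a}}_{0,n,n+1}=n+1-d$, equivalently (rank--nullity, the matrix having $n+1$ columns) that $\operatorname{rank}\rH^{\s{a}}_{0,n,n+1}=d$. For this I would argue on the columns $C_j=\rH^{\s{a}}_{j,n,1}=(a_j,\ldots,a_{j+n-1})^t$: since $\rP^{\s{a}}$ is generating, every entry of $C_j$ with $j\ge d$ satisfies $a_{j+\ell}=\sum_{i=0}^{d-1}g_i\,a_{(j-d+i)+\ell}$, so $C_j=\sum_{i=0}^{d-1}g_i\,C_{j-d+i}$; iterating, $C_d,\ldots,C_n$ all lie in $\operatorname{span}(C_0,\ldots,C_{d-1})$. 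By Proposition~\ref{fpropSRL} the submatrix $\rH^{\s{a}}_{0,n,n}=[C_0\mid\cdots\mid C_{n-1}]$ has rank $d$, so $C_0,\ldots,C_{d-1}$ are independent and already span the whole column space of $\rH^{\s{a}}_{0,n,n+1}$; thus the rank is $d$. Combining, the $n+1-d$ independent kernel vectors found above form a basis of $\Ker\rH^{\s{a}}_{0,n,n+1}$, so $Y$ solves the equation iff $P$ is a $\K$-combination of $\rP^{\s{a}},X\rP^{\s{a}},\ldots,X^{n-d}\rP^{\s{a}}$, iff $P$ is a multiple of $\rP^{\s{a}}$.

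The main friction I anticipate is the index bookkeeping in the column recurrence — pinning down that $j\ge d$ is exactly the range in which $C_j=\sum_{i=0}^{d-1}g_i\,C_{j-d+i}$ is meaningful, and that this range already covers $C_d,\ldots,C_n$ — with everything else being routine linear algebra. One could alternatively try to deduce the corollary from Proposition~\ref{fpropSRL} purely formally via the equivalence ``$P$ generating $\iff$ $\rP^{\s{a}}\mid P$'', but the rank computation is still needed in order to know that the finitely many visible conditions, $0\le k\le n-1$, already force $P$ to be a generating polynomial.
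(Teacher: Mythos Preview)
Your proposal is correct and follows essentially the same route as the paper: exhibit the $n+1-d$ staircase vectors $[-g_0,\ldots,-g_{d-1},1,0,\ldots,0]^t,\ldots,[0,\ldots,0,-g_0,\ldots,-g_{d-1},1]^t$ in the kernel via the column recurrence $C_j=\sum_{i=0}^{d-1}g_i\,C_{j-d+i}$ coming from the fact that $\rP^{\s a}$ generates $\s a$, observe their independence, and match this against the kernel dimension obtained from Proposition~\ref{fpropSRL}. Your write-up is in fact a bit more careful than the paper's (you argue the rank of $\rH^{\s a}_{0,n,n+1}$ explicitly rather than just invoking the proposition, and you get the dimension $n+1-d$ right where the paper's proof has a harmless off-by-one slip), but the argument is the same.
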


Par ailleurs, nous faisons la constatation suivante.
\begin{ffact}
\label{ffactEvident}
En posant $m:=2n-1$ et   $S:=\sum_{i=0}^{m}a_{m-i}X^i$, l'équation
(\ref{fEqSRL5}) est équivalente à l'affirmation suivante.
Le \pol $P$ est de degré $\leq n$ et on a:
$$
\exists R\in\K[X]\;\mathrm{tel\; que}\quad \deg(R)<n\quad 
\mathrm{et}\quad
P(X)S(X)\equiv R(X)\quad \mod X^{2n}.
$$
\cad encore
\begin{equation} \label{fEqSRL6}
\exists R,U\in\K[X]\;\mathrm{tels\; que}\quad \deg(R)<n\quad
\mathrm{et}\quad P(X)S(X)+U(X)X^{2n}= R(X)
\end{equation}
\end{ffact}
Le problème de trouver le \polgmin de $\s{a}$ est donc ramené au
problème de réaliser (\ref{fEqSRL6}) avec le degré de $P$ 
minimum.
Or il est bien connu que:
\begin{itemize}
\item l'\algo d'Euclide étendu démarrant avec $R_0=X^{2n}$ et 
$R_1=S$
réalise une équation telle que (\ref{fEqSRL6}) avec $R=R_k$ 
lorsque le
premier reste $R_k$ de degré $<n$ est atteint,
\item quand une équation telle que (\ref{fEqSRL6}) est réalisée 
par
l'\algo d'Euclide étendu, il n'est pas possible d'obtenir une 
équation
du m\^{e}me type $P'(X)S(X)+U'(X)X^{2n}= R'(X)$ avec $\deg(R')< 
\deg(R_{k-
1})$ et $\deg(P')< \deg(P)$.
\end{itemize}

Ceci prouve que notre algorithme est correct.

\section{Une variante paresseuse}

Outre sa simplicité, notre variation sur l'\algo de 
Berlekamp-Massey 
admet une variante paresseuse, qui peut \^etre utile dans certains cas.

Par exemple considérons l'algèbre $B$ de décomposition universelle d'un
\pol $f(X)$ (séparable) sur un corps $\K$, ou plus 
généralement dans un quotient $A$ de $B$. $A$ est une algèbre
zéro-dimensionnelle dont on conna\^{\i}t une présentation:
$A\simeq\K[X_1,\ldots ,X_n]/\gen{f_1,\ldots ,f_s}$, o\`u $f_1,\ldots 
,f_s$ est une base de Gr\"obner.
La dimension $m$ de $A$ sur $\K$ est trop grande pour qu'on envisage 
de manipuler des matrices dans $\K^{m\times m}$. 
On est intéressé pour calculer le \polmin d'un \elt $x$ de $A$, 
ou à défaut un facteur de ce \polmin. 
On utilise pour cela l'\algo de Wiedemann. On calcule donc les termes 
successifs d'une \srl: $a_n=\phi(x^n)$  o\`u $\phi$ est une forme 
linéaire sur $A$. Le calcul de $x^n$ étant assez co\^uteux (on 
doit à chaque fois calculer la nouvelle forme normale), et le 
\polmin étant a priori de degré nettement plus petit que la 
dimension de $A$ (vu le choix judicieux de $x$), on a intér\^et à 
tester
régulièrement si le \polmin de $x$ n'est pas atteint ``avant 
terme". \ldots

\newpage

 \begin{falgorH}[L'\algo de Berlekamp-Massey paresseux (dans un contexte particulier)]\label{fBMAP}
\Entree  $ m \in \N $, $C \in \K^n$, $G$: base de Gr\"obner, $a \in
A$. Le \polmin est de degré  $\leq m$.
\Sortie Le \polmin $P$ de $a$
\Debut \Varloc $l,i$: entiers,
$R,R_{-1},R_0,R_1,V,V_{-1},V_0,V_1,U,U_{-1},U_0,U_1, S_0,S_1,Q$ :
\pols en $x$, $L,W$: listes dans \(A\), \(val\): \elt de \(A\);
\hst \# initialisation
  \hsu
$ l =\lfloor m/4 \rfloor$;
\hsu $L:=[1,a]$;
$W:=[1,\mathrm{Value}(a,C)]$;
\hsu $S_0:=x^{2l}$ ; $S_1 =W[1]\,x^{2l-1}+W[2]\,x^{2l-2}$;
  \hst \# boucle
\hsu \por{i}{3}{2l}\hsd
         $L[i]:=\mathrm{normalf}(L[i-1]a ,G);$
           $V[i]:=\mathrm{Value}( L[i],C);$
         $S_1 = S_1 + V[i] x^{2l-i};$
\hsu
\finpour 
\hsu
$R_0:=S_0; R_1:=S_1$; 
$V_0=0$ ; $V_1=1$ ; $U_0=1$ ; $V_1=0$;
   \hst \# boucle \hsu
\tantque{l \leq \deg(R_1)} 
\hsd $(Q,R) :=$ quotient et reste
de $R_0$ divisé par $R_1$ ; 
\hsd $V := V_0-Q V_1$; $U := U_0-Q
U_1$ ;  $U_{-1}:= U_0$; $V_{-1}:= V_0$; 
\hsd $V_0:=V_1$ ; $V_1:=V$ ; $U_0:=U_1$ ; $U_1:=U$; $R_0:=R_1$ ; $R_1:=R$ ;
\hsu
\fintantque
\hsu
\(val:=\mathrm{Subs}(x=a,V_1)\);
   \hst \# boucle 
\hsu  \tantque{ val \neq 0 } 
  \hsd $l:=l+1;$
  \hsd  \# boucle
\hsd \por{i}{2l-1}{2l} 
\hst
         $L[i]:=\mathrm{normalf}(L[i-1]a ,G)$; \hst
         $W[i]:=\mathrm{Value}( L[i],C)$;
\hsd \finpour 
\hsd
  $S_0=x^2 S_0$; $S_1 = x^2S_1 + W[2l-1] x+ W[2l];$
\hsd
$R_0:=U_{-1} S_0+V_{-1} S_1$; $R_1:=U_0 S_0+V_0 S_1$ ; 
  \hsd
$U_1:=U_0; V_1:=V_0; U_0:=U_{-1}; V_0:=V_{-1};$
  \hsd \# boucle \hsd
\tantque{l \leq \deg(R_1)} 
\hst $(Q,R) :=$ quotient et reste
de $R_0$ divisé par $R_1$ ; 
\hst $V := V_0-Q V_1$; $U := U_0-Q
U_1$; $U_{-1}:= U_0$; $V_{-1}:= V_0$; 
\hst $V_0:=V_1$ ; $V_1:=V$
;$U_0:=U_1$ ; $U_1:=U$;
  $R_0:=R_1$ ; $R_1:=R$ ;
\hsd
\fintantque
  \hsd
  \(val:=\mathrm{Subs}(x=a,V_1)\)
\hsu \fintantque
\hst \# sortie \hsu  Retourner $P:=V_1/
\mathrm{leadcoeff}(P)$. \fin
\end{falgorH}
  
\addcontentsline{toc}{section}{Références}

\rdb
\small

\stopcontents[french]

\end{document}